\newtheorem{theorem}{Theorem}[section]
\newtheorem{corollary}[theorem]{Corollary}
\newtheorem{lemma}[theorem]{Lemma}
\newtheorem{definition}[theorem]{Definition}
\newtheorem{question}{Question}
\setlist[enumerate,1]{label=(\roman*)}
\newcommand{\N}{\mathbb{N}}
\newcommand{\Z}{\mathbb{Z}}
\newcommand{\Q}{\mathbb{Q}}
\newcommand{\R}{\mathbb{R}}
\DeclareMathOperator{\conv}{conv}
\DeclareMathOperator{\Eq}{eq}
\newcommand{\spmatrix}[1]{\bigl(\begin{smallmatrix}#1\end{smallmatrix}\bigr)}
\renewcommand{\epsilon}{\varepsilon}
\definecolor{grayif}{gray}{0.7}
\definecolor{lightpurple}{RGB}{170,0,255}
\definecolor{myorange}{RGB}{255,140,0}
\definecolor{grayif}{gray}{0.7}
\definecolor{lightpurple}{RGB}{170,0,255}
\definecolor{myorange}{RGB}{255,140,0} 
\newcommand{\vgap}{\vspace{.2em}}
\renewcommand{\SetKwInOut}[2]{%
  \sbox\algocf@inoutbox{\KwSty{#2}\algocf@typo:}%
  \expandafter\ifx\csname InOutSizeDefined\endcsname\relax
    \newcommand\InOutSizeDefined{}\setlength{\inoutsize}{\wd\algocf@inoutbox}%
    \sbox\algocf@inoutbox{\parbox[t]{\inoutsize}{\KwSty{#2}\algocf@typo:\hfill}~}\setlength{\inoutindent}{\wd\algocf@inoutbox}%
  \else
    \ifdim\wd\algocf@inoutbox>\inoutsize%
    \setlength{\inoutsize}{\wd\algocf@inoutbox}%
    \sbox\algocf@inoutbox{\parbox[t]{\inoutsize}{\KwSty{#2}\algocf@typo:\hfill}~}\setlength{\inoutindent}{\wd\algocf@inoutbox}%
    \fi%
  \fi
  \algocf@newcommand{#1}[1]{%
    \ifthenelse{\boolean{algocf@inoutnumbered}}{\relax}{\everypar={\relax}}%
    {\let\\\algocf@newinout\hangindent=\inoutindent\hangafter=1\parbox[t]{\inoutsize}{\KwSty{#2}\algocf@typo:\hfill}~##1\par}%
    \algocf@linesnumbered
  }}%
\renewcommand{\vec}[1]{\ensuremath{\boldsymbol{#1}}}
\renewcommand{\epsilon}{\varepsilon}
\newcommand{\p}{\ensuremath{\textup{\textsf{P}}}\xspace}
\newcommand{\np}{\ensuremath{\textup{\textsf{NP}}}\xspace}
\begin{document}
\title{An unconditional lower bound for the active-set method\\ in convex quadratic maximization}

\author[1]{Eleon Bach} 
\author[2]{Yann Disser \orcidlink{0000-0002-2085-0454}~} 
\author[3]{Sophie Huiberts \orcidlink{0000-0003-2633-014X}~}
\author[2]{Nils Mosis \orcidlink{0000-0002-0692-0647}~}
\date{}

\affil[1]{TU M\"unchen, Germany}
\affil[2]{TU Darmstadt, Germany}
\affil[3]{CNRS, Univ Clermont Auvergne, INP Clermont Auvergne, Mines Saint-Etienne, LIMOS}

\maketitle          

\begin{abstract}

We prove that the active-set method needs an exponential number of iterations in the worst-case to maximize a convex quadratic function subject to linear constraints, regardless of the pivot rule used.
This substantially improves over the best previously known lower bound [IPCO 2025], which needs objective functions of polynomial degrees~$\omega(\log d)$ in dimension~$d$, to a bound using a convex polynomial of degree~2. 
In particular, our result firmly resolves the open question [IPCO 2025] of whether a constant degree suffices, and it represents significant progress towards linear objectives, where the active-set method coincides with the simplex method and a lower bound for all pivot rules would constitute a major breakthrough.

Our result is based on a novel extended formulation, recursively constructed using deformed products.
Its key feature is that it projects onto a polygonal approximation of a parabola while preserving all of its exponentially many vertices.
We define a quadratic objective that forces the active-set method to follow the parabolic boundary of this projection, without allowing any shortcuts along chords corresponding to edges of its full-dimensional preimage.

\end{abstract}

\section{Introduction}

The existence of an efficient pivot rule for the simplex method is a notorious open question since the inception of the method by Dantzig in 1947~\cite{Dantzig82}, and could yield a strongly polynomial algorithm for linear optimization, as well as a proof of the polynomial (monotone) Hirsch conjecture~\cite{dantzig1963linear}. 
These are widely regarded as two of the most important open problems in the theory of linear programming~\cite{santos2012counterexample,smale2000mathematical}.

Exponential worst-case constructions for the simplex method were found early on for many natural pivot rules~\cite{avis1978notes, goldfarb1979worst, jeroslow1973simplex, klee1972good, murty1980computational}, using deformations of the hypercube, which were later unified by Amenta and Ziegler~\cite{amenta1999deformed} in terms of~\emph{deformed products} (see Section~\ref{sec:extended_form}).
It took a breakthrough several decades later by Friedmann in 2011~\cite{friedmann2011subexponential_1}, based on a connection to parity games and Markov decision processes, in order to tackle more balanced pivot rules, such as randomized pivot rules~\cite{friedmann2011subexponential_2} and history-based pivot rules~\cite{avis2017exponential, disser2023exponential}.
The majority of existing results are tailored to individual rules,
with exceptions including a recent lower bound by Black~\cite{black2025exponentiallowerbounds} for all shadow vertex rules and some normalized-weight rules~\cite{black2023normalizedweight}, as well as constructions that apply to a small number of rules simultaneously~\cite{disser2023unified, disser2018simplex}.
It seems highly unlikely that any of these constructions can be generalized to \emph{unconditional} lower bounds \emph{for all pivot rules}. 

Disser and Mosis~\cite{disser2025unconditional} recently proposed a novel route to proving unconditional lower bounds by analyzing the \emph{active-set method}, a natural extension of the simplex method to nonlinear objectives (see Section~\ref{sec: simplex and active set}). 
They proved an exponential lower bound on its running time that holds \emph{for all pivot rules} --- the first such result, in any setting, that is unconditional (to the best of our knowledge).

In this paper, we refine their approach by constructing an extended formulation~\cite{Kaibel11,Yannakakis91}, using the deformed-product framework of Amenta and Ziegler~\cite{amenta1999deformed}, and apply it to derive an unconditional lower bound for convex quadratic maximization.
Our extended formulation, when projected appropriately, approximates a portion of the parabola to high accuracy, and may be of independent interest.

\subparagraph*{Our results.}

We prove an unconditional exponential lower bound on the number of iterations of the active-set method for convex quadratic maximization programs that holds \emph{for all pivot rules}.

\begin{theorem}[Theorem~\ref{thm:main_details}]\label{thm:main}
There exists a polytope~$\mathcal{P}\! \subseteq {\R}^d$ with~$2d$ facets, and a convex quadratic polynomial~$f\colon \R^d \to \R$, such that, for some starting vertex, the active-set method needs~$2^d-1$ iterations to maximize~$f$ over~$\mathcal{P}$, irrespective of the pivot rule.
\end{theorem}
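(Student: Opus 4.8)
The plan is to realize the classical deformed-product paradigm of Amenta--Ziegler with a \emph{quadratic} objective by routing the whole construction through a polygonal approximation of a parabola. I would first isolate the feature of convex maximization that drives everything: since the restriction of a convex function to a line segment is itself convex and therefore attains its maximum at an endpoint, an edge of~$\mathcal{P}$ emanating from the current vertex is $f$-improving if and only if the objective value at its opposite endpoint strictly exceeds the value at the current vertex (if the opposite endpoint is no higher, the entire edge lies weakly below the current value). This reduces the active-set dynamics, for \emph{any} pivot rule, to a monotone walk on the vertex--edge graph of~$\mathcal{P}$ directed by~$f$; to force $2^d$ iterations it then suffices to exhibit a start vertex from which the only monotone-increasing walk visits all $2^d$ vertices, i.e.\ a path along which, at each vertex, exactly one incident edge leads to a strictly higher $f$-value.

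Next I would build~$\mathcal{P} \subseteq \R^d$ recursively. Starting from a segment, each step applies a deformed product that adds one coordinate and two facets while \emph{doubling} the number of vertices, so that after $d$ steps $\mathcal{P}$ has $2d$ facets and $2^d$ vertices. The deformation parameters are chosen so that the projection $\proj\colon \R^d \to \R^2$ onto a fixed plane maps the $2^d$ vertices bijectively onto the vertices of a convex polygon~$\twogon$ whose $2n = 2^d$ vertices lie on a monotone arc of a parabola, preserving every vertex of the shadow. Through the recursion I would track the combinatorial edge structure, showing that each edge of~$\mathcal{P}$ projects either to a boundary edge of~$\twogon$ joining two consecutive vertices along the arc, or to a chord joining non-consecutive vertices, where the chords are precisely the edges created at the gluing step of each deformed product.

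The objective~$f$ is then a single convex quadratic, pulled back from the parabola plane and tuned so that its values along the arc order $v_0, v_1, \dots, v_{2^d-1}$ are strictly increasing. The heart of the argument is a \textbf{forced-path lemma}: at each vertex~$v_i$ the boundary edge to~$v_{i+1}$ strictly increases~$f$, while every chord incident to~$v_i$ lands at a vertex of strictly smaller $f$-value and is therefore, by the convexity reduction above, non-improving. Together with the designated start~$v_0$, this shows that regardless of the pivot rule the active-set method must traverse the arc one vertex at a time, performing $2^d$ iterations, which proves the theorem.

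I expect the forced-path lemma to be the main obstacle, and specifically the requirement that a \emph{degree-two} objective suffice. A linear objective would make the active-set method coincide with the simplex method and is off-limits, while the previous approach required degree~$\omega(\log d)$; the gain comes from exploiting the parabola, whose defining quadratic can simultaneously order all $2^d$ shadow vertices and render every gluing chord a value-decreasing edge. The delicate part is to verify, uniformly across the recursion, that the deformed-product shearing parameters and the coefficients of the quadratic can be chosen consistently so that the chords created at \emph{every} doubling step point backward in the arc order; controlling how the approximation error of the polygonal parabola accumulates under repeated deformed products is where the quantitative work will concentrate.
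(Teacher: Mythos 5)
Your high-level architecture coincides with the paper's: a deformed-product extended formulation whose $2^d$ vertices project bijectively onto a convex polygonal arc of the parabola, together with a convex quadratic pulled back from the projection plane that forces a Hamiltonian monotone path. However, your central ``convexity reduction'' contains a genuine error. For a convex $f$, it is \emph{not} true that an edge direction at a vertex is improving in the active-set sense ($\nabla f(\vec x)^\top\vec d>0$) if and only if the opposite endpoint has strictly larger value; only one implication holds (improving $\Rightarrow$ strictly higher endpoint, by the gradient inequality). The converse fails: for $f(u)=u^2-0.9u$ on the segment from $u=0$ to $u=1$ we have $f(1)=0.1>0=f(0)$, yet $f'(0)=-0.9<0$, so the direction toward the higher endpoint is not improving. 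The active-set method's loop condition tests the gradient, not endpoint values, so reducing its dynamics to ``move to any neighbor of strictly higher value'' is invalid.

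This error makes your forced-path lemma impossible as stated. Chords are undirected: an edge of $\mathcal{P}$ projecting to a chord between $v_i$ and $v_{i+k}$ with $k\ge 2$ is incident to both endpoints, and since your objective strictly increases along the arc, from $v_i$ it lands at a vertex of strictly \emph{larger} value. Hence ``every chord incident to every vertex lands at a strictly smaller value'' cannot hold for any objective (and such chords must exist, since every vertex of the $d$-dimensional polytope $\mathcal{P}$ has at least $d$ incident edges while the path uses at most two per vertex). Under your criterion the method could then shortcut forward along chords and the bound collapses. The missing idea --- and the paper's actual mechanism --- is to keep the gradient criterion and tune the quadratic so that forward chords are non-improving \emph{despite} ending at higher values: with vertices $\vec{x}^{(t)}=\frac{1}{M-1}\bigl(t,\,\frac{t^2}{M-1}-t\bigr)^\top$ and $f(x_1,x_2)=x_1^2-cx_1-x_2$, $c=1-\frac{3}{2(M-1)}$, one computes
\begin{equation*}
\nabla f(\vec{x}^{(t)})^\top\left(\vec{x}^{(t+k)}-\vec{x}^{(t)}\right)=\frac{k}{(M-1)^2}\left(\frac{3}{2}-k\right),
\end{equation*}
which is positive if and only if $k=1$. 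Thus every forward chord ($k\ge 2$) has higher endpoint value but negative gradient inner product, every backward direction fails as well, and the unique improving direction at each vertex is the next path edge, irrespective of the pivot rule; combined with the observation that convexity forces motion along an improving edge to continue to its endpoint, this yields the $2^d$ iterations. Your proposal needs this gradient-based forced-path lemma in place of the value-based one.
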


This substantially improves on the lower bound by Disser and Mosis~\cite{disser2025unconditional}, which requires a polynomial degree of~$\omega(\log d)$ for a super-polynomial bound and a polynomial degree of~$\Omega(d)$ for an exponential bound, to an exponential lower bound for a convex polynomial of degree~$2$.
In particular, we firmly resolve the open question (Question 2 in~\cite{disser2025unconditional}) of whether a constant degree can be achieved.
Furthermore, our result represents significant progress towards concave quadratic (convex quadratic for minimization) objectives, where weakly polynomial algorithms exist~\cite{kozlov1979polynomial,ye1989extension} and the existence of a strongly polynomial algorithm is a prominent open problem.
A similar result for linear functions (i.e., polynomial degree~$1$) would disprove the existence of a polynomial time pivot rule for the simplex method, which would constitute a major breakthrough.

The fact that our lower bound holds for a \emph{convex maximization} objective is a qualitative improvement in terms of approaching a result for the simplex method, in the sense that convex functions over convex, compact domains attain their maxima at extreme points.
This means that the active-set method (when started at a vertex) moves along edges of the polyhedron until reaching a local optimum at a vertex, in simplex-like fashion.
For non-convex objectives, such as the one used by Disser and Mosis~\cite{disser2025unconditional}, the active-set method generally moves through the interior --- a behavior that differs qualitatively from the simplex method.

It is \np-complete to maximize a convex quadratic objective subject to linear constraints~\cite{pardalos1991quadratic, sahni1974computationally},
implying a super-polynomial \emph{conditional} lower bound on the running time of the active-set method, provided that $\p \neq \np$.
A recent line of work approaches the complexity of the simplex method by showing that it is \np-hard to approximate the shortest monotone path between vertices of a polytope~\cite{deloera2022pivot, cardinal2025inapproximability}.
In contrast, the value of our contribution lies in the fact that we provide an \emph{unconditional} lower bound.

Our proof is based on a novel extended formulation~\cite{Kaibel11,Yannakakis91} using deformed products~\cite{amenta1999deformed} that may be interesting in its own right. 
To the best of our knowledge, this represents the first extended formulation constructed via deformed products.
\begin{theorem}[Theorem~\ref{thm:extended parabola}]\label{thm:main2}
    For every~$n,\, d \in \mathbb N$ such that $d$ and~$n/(2d)$ are even integers, there exists a polytope~$\mathcal{P} \subseteq \R^d$ with~$n/2$ facets and~$(n/d)^{d/2}$ vertices and an orthogonal projection $\R^d \to \R^2$ such that the vertices project onto
    \[
      \left\{\begin{pmatrix} x \\x^2 - x \end{pmatrix} \in \R^2 \colon  x \in [0,1], \ x\cdot\left((n/d)^{d/2}-1\right) \in \mathbb{Z}\right\}.
    \]
\end{theorem}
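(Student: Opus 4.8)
The plan is to realize $\mathcal{P}$ as an iterated deformed product of $d/2$ copies of a single $m$-gon, where $m := n/d$ is even (since $n/(2d)\in\N$), and to choose the deformations to be vertical shears whose slopes equal the tangent slopes of the parabola at coarser approximation points. Writing $N := m^{d/2}$ and indexing the prospective vertices by their base-$m$ digit vectors $\vec{c} = (c_1,\ldots,c_{d/2}) \in \{0,\ldots,m-1\}^{d/2}$, I associate to each the integer $X := \sum_{j=1}^{d/2} c_j\, m^{j-1} \in \{0,\ldots,N-1\}$. Since base-$m$ expansion is a bijection, $X$ ranges over $\{0,\ldots,N-1\}$ exactly once, which is what will produce the $N$ equally spaced abscissae $x = X/(N-1)$, i.e.\ the condition $x\cdot(N-1)\in\Z$.

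I would fix the base polygon $C := \conv\{(c, c^2) : c = 0,\ldots,m-1\} \subseteq \R^2$. As $m$ points on the convex parabola $y=x^2$, these are all vertices, so $C$ is an $m$-gon with $m$ facets; the product of $d/2$ copies therefore has $m^{d/2} = (n/d)^{d/2}$ vertices and $(d/2)\cdot m = n/2$ facets, matching the claimed counts, and the deformation will preserve these since it acts combinatorially as the product. The heart of the construction is the choice of shears, which I read off from the telescoping identity $X^2 = \sum_{j=1}^{d/2}(X_j^2 - X_{j-1}^2)$, where $X_j := \sum_{\ell\le j} c_\ell\, m^{\ell-1}$ is the partial value of the first $j$ digits and $X_0 = 0$. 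A direct computation gives $X_j^2 - X_{j-1}^2 = m^{2(j-1)} c_j^2 + 2 X_{j-1}\, m^{j-1} c_j$: the first summand depends only on the $j$-th digit and is the intrinsic parabola term of the $j$-th (scaled) copy of $C$, while the second is a cross term, linear in the new digit with coefficient $2X_{j-1}$ depending only on the earlier digits. I would accordingly place the $j$-th copy of $C$ in its own coordinate pair and shear it vertically with slope $2X_{j-1}$ --- exactly the tangent slope of $y=x^2$ at the coarse point $X_{j-1}$ --- so that its contribution to the second projection coordinate equals $X_j^2 - X_{j-1}^2$. Since $X_{j-1}$ is a linear function of the first coordinates of the earlier factors, this is a lower-triangular deformation, and summing over $j$ the projection onto the two distinguished coordinates sends the vertex indexed by $\vec{c}$ to $(X, X^2)$.

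Passing to the stated normalization and objective is routine: rescaling $X\mapsto X/(N-1)$ and the linear shift $x^2\mapsto x^2-x$ are both affine in $(X,X^2)$, so I would absorb them into the coordinates of the construction (scaling the $j$-th factor and subtracting a multiple of its first coordinate from its second), leaving $\mathcal{P}$ an affine image of the deformed product and hence combinatorially unchanged. To make the projection genuinely orthogonal onto a coordinate $2$-plane, I would apply an invertible linear change of coordinates sending the two functionals $x = \sum_j \tilde u_j$ and $y = \sum_j \tilde w_j$ to the first two coordinates and completing to an orthonormal frame on the complement; this realizes $\mathcal{P}$ so that orthogonal projection onto its first two coordinates yields precisely the target set $\{(x, x^2-x) : x\in[0,1],\ x\cdot((n/d)^{d/2}-1)\in\Z\}$.

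The main obstacle is the combinatorial claim underlying the counts: I must show that these vertical shears, whose slopes $2X_{j-1}$ grow as large as $2(m^{j-1}-1)$ and are thus far from infinitesimal, nonetheless preserve the combinatorial type of the product, so that no vertices are lost or created and all $n/2$ inequalities remain facet-defining. I would prove this by induction on the number of factors, checking at each step that the deformation is admissible in the deformed-product framework of Section~\ref{sec:extended_form} --- each fiber being an affinely sheared, hence combinatorially intact, copy of $C$, with the induced face lattice equal to that of the product of the first $j-1$ factors with $C$. Equivalently, I would exhibit for every digit vector a linear functional uniquely maximized at the corresponding point and verify that the $n/2$ facet inequalities are irredundant, which pins down both the vertex count $(n/d)^{d/2}$ and the facet count $n/2$ directly.
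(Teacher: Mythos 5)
Your high-level plan (iterated deformed products indexed by base-$m$ digits, with the telescoping identity $X^2=\sum_j\bigl(X_j^2-X_{j-1}^2\bigr)$ driving the two-dimensional projection) matches the paper's strategy, and your argument that all $(n/d)^{d/2}$ image points are vertices is fine, since they project to distinct points in convex position. But the device you use to realize the cross term $2X_{j-1}m^{j-1}c_j$ --- vertical shears whose slopes depend on the earlier factors --- is exactly where the proof breaks. The deformed-product framework you appeal to (Theorem~\ref{lem:AZ}) does not permit arbitrary combinatorial-type-preserving deformations of the fiber: it requires the fiber polytopes $\mathcal{V},\mathcal{W}$ to be \emph{normally} equivalent, i.e., cut out by a fixed facet-normal matrix $B$ with only the right-hand side varying (linearly in $\varphi$). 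A vertical shear changes edge slopes, hence facet normals, so sheared copies of your polygon $C$ are combinatorially but \emph{not} normally equivalent, and a family of fibers sheared by a base-dependent slope cannot be written as $B\vec u\le\vec\beta(\vec x)$ for any fixed $B$. Concretely, your map sends the $j$-th factor's point $(u_j,w_j)$ to $(u_j,\,w_j+2X_{j-1}u_j)$ with $X_{j-1}$ linear in earlier coordinates, which is \emph{quadratic} in the ambient coordinates; the image of $C^{d/2}$ under this map need not be a polyhedron at all, so you must define $\mathcal{P}$ as the convex hull of the $N$ image points --- and then the facet count and product combinatorics, which Theorem~\ref{lem:AZ}~\ref{lem:AZ_item2}--\ref{lem:AZ_item3} would otherwise give you for free, are precisely what is missing.

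This is not a repairable bookkeeping issue: the product combinatorics genuinely fails for your construction. Take $m=4$ and two factors, so the candidate vertices are $(c_1,\,c_1^2,\,4c_2,\,16c_2^2+8c_1c_2)$ for $c_1,c_2\in\{0,1,2,3\}$. If the hull had the face lattice of $C\times C$ (and your claimed $n/2=8$ facets), then for the edge $c_2\in\{0,1\}$ of the second factor, the eight vertices with $c_2\in\{0,1\}$ would lie on a common supporting hyperplane. Those points are $(c_1,c_1^2,0,0)$ and $(c_1,c_1^2,4,16+8c_1)$ for $c_1\in\{0,1,2,3\}$, and any affine functional $ax_1+bx_2+cx_3+ex_4-f$ vanishing on all of them is identically zero: vanishing on the first four forces $a=b=f=0$ (a quadratic in $c_1$ with four roots), and then the $c_1$-dependence of $16+8c_1$ forces $e=0$ and hence $c=0$. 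So these eight points affinely span $\R^4$, the ``fiber-type'' facets do not exist, your induction fails at the very first deformed-product step, and there is no control on the number of facets. The paper avoids this by never shearing: it uses two \emph{different} polygons $\mathcal{V}_{M,N}$ and $\mathcal{W}_{M,N}$, both inscribed in the parabola with interleaved abscissae engineered so that corresponding edges have equal slopes (using $\operatorname{slope}(\vec h(x);\vec h(y))=x+y-1$), hence the polygons are normally equivalent and Theorem~\ref{lem:AZ} applies verbatim. Your cross term is then absorbed not by a shear but by the parabola itself --- each fiber vertex moves along a chord of the parabola as the base coordinate varies --- combined with the rescaling $\bigl((M_i-1)/(M_{i+2}-1)\bigr)^2$ in the definition of $\varphi'_{i+2}$; verifying that this reproduces $\varphi'=\varphi^2-\varphi$ is exactly the computation in Appendix~\ref{app:proof_of_3.5}.
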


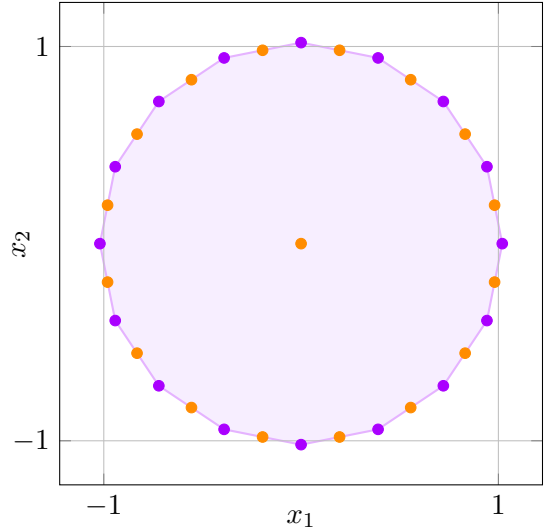
\begin{figure}
\centering
\begin{tikzpicture}
\begin{axis}[width=8cm, height=8cm,
  xlabel={$x_1$}, ylabel={$x_2$},
  xlabel style={at={(axis description cs:0.5,0.05)}, anchor=north},
  ylabel style={at={(axis description cs:0.15,0.5)}, anchor=south},
  xtick={-1,1}, ytick={-1,1},
  grid=both, axis equal image, axis on top, clip=false]

\definecolor{lightpurple}{RGB}{170,0,255}
\definecolor{lightpurplefill}{RGB}{230,200,255}
\definecolor{myorange}{RGB}{255,140,0}
\addplot[fill=lightpurplefill, draw=lightpurple, opacity=0.3, thick] coordinates {
  (-1.0196,0.0000)
  (-0.9420,-0.3902)
  (-0.7210,-0.7210)
  (-0.3902,-0.9420)
  (0.0000,-1.0196)
  (0.3902,-0.9420)
  (0.7210,-0.7210)
  (0.9420,-0.3902)
  (1.0196,0.0000)
  (0.9420,0.3902)
  (0.7210,0.7210)
  (0.3902,0.9420)
  (0.0000,1.0196)
  (-0.3902,0.9420)
  (-0.7210,0.7210)
  (-0.9420,0.3902)
  (-1.0196,0.0000)
};
\addplot[only marks, color=myorange, mark=*, mark size=2pt] coordinates {(0.0000,0.0000)};
\addplot[only marks, color=lightpurple, mark=*, mark size=2pt] coordinates {(0.0000,1.0196)};
\addplot[only marks, color=lightpurple, mark=*, mark size=2pt] coordinates {(0.0000,-1.0196)};
\addplot[only marks, color=lightpurple, mark=*, mark size=2pt] coordinates {(1.0196,0.0000)};
\addplot[only marks, color=lightpurple, mark=*, mark size=2pt] coordinates {(0.7210,0.7210)};
\addplot[only marks, color=lightpurple, mark=*, mark size=2pt] coordinates {(0.9420,0.3902)};
\addplot[only marks, color=myorange, mark=*, mark size=2pt] coordinates {(0.8315,0.5556)};
\addplot[only marks, color=myorange, mark=*, mark size=2pt] coordinates {(0.9808,0.1951)};
\addplot[only marks, color=lightpurple, mark=*, mark size=2pt] coordinates {(0.3902,0.9420)};
\addplot[only marks, color=myorange, mark=*, mark size=2pt] coordinates {(0.5556,0.8315)};
\addplot[only marks, color=myorange, mark=*, mark size=2pt] coordinates {(0.1951,0.9808)};
\addplot[only marks, color=lightpurple, mark=*, mark size=2pt] coordinates {(0.7210,-0.7210)};
\addplot[only marks, color=lightpurple, mark=*, mark size=2pt] coordinates {(0.9420,-0.3902)};
\addplot[only marks, color=myorange, mark=*, mark size=2pt] coordinates {(0.8315,-0.5556)};
\addplot[only marks, color=myorange, mark=*, mark size=2pt] coordinates {(0.9808,-0.1951)};
\addplot[only marks, color=lightpurple, mark=*, mark size=2pt] coordinates {(0.3902,-0.9420)};
\addplot[only marks, color=myorange, mark=*, mark size=2pt] coordinates {(0.5556,-0.8315)};
\addplot[only marks, color=myorange, mark=*, mark size=2pt] coordinates {(0.1951,-0.9808)};
\addplot[only marks, color=lightpurple, mark=*, mark size=2pt] coordinates {(-1.0196,0.0000)};
\addplot[only marks, color=lightpurple, mark=*, mark size=2pt] coordinates {(-0.7210,0.7210)};
\addplot[only marks, color=lightpurple, mark=*, mark size=2pt] coordinates {(-0.9420,0.3902)};
\addplot[only marks, color=myorange, mark=*, mark size=2pt] coordinates {(-0.8315,0.5556)};
\addplot[only marks, color=myorange, mark=*, mark size=2pt] coordinates {(-0.9808,0.1951)};
\addplot[only marks, color=lightpurple, mark=*, mark size=2pt] coordinates {(-0.3902,0.9420)};
\addplot[only marks, color=myorange, mark=*, mark size=2pt] coordinates {(-0.5556,0.8315)};
\addplot[only marks, color=myorange, mark=*, mark size=2pt] coordinates {(-0.1951,0.9808)};
\addplot[only marks, color=lightpurple, mark=*, mark size=2pt] coordinates {(-0.7210,-0.7210)};
\addplot[only marks, color=lightpurple, mark=*, mark size=2pt] coordinates {(-0.9420,-0.3902)};
\addplot[only marks, color=myorange, mark=*, mark size=2pt] coordinates {(-0.8315,-0.5556)};
\addplot[only marks, color=myorange, mark=*, mark size=2pt] coordinates {(-0.9808,-0.1951)};
\addplot[only marks, color=lightpurple, mark=*, mark size=2pt] coordinates {(-0.3902,-0.9420)};
\addplot[only marks, color=myorange, mark=*, mark size=2pt] coordinates {(-0.5556,-0.8315)};
\addplot[only marks, color=myorange, mark=*, mark size=2pt] coordinates {(-0.1951,-0.9808)};
\end{axis}
\end{tikzpicture}
\caption{Projection of the intersection of the Ben-Tal--Nemirovski~\cite{ben2001polyhedral} polyhedral approximation of the Leibniz cone $\{(t,x) : \|x\| \le t\}$ with the halfspace $t \leq 1$, projected to the plane $t=1$ (purple). Additional vertices (orange) are not preserved, i.e., not projected onto vertices.\label{fig:Ben-Tal--Nemirovski} }
\end{figure}

The key feature of our extended formulation is that its projection yields a polygonal approximation of a parabola to any accuracy.
Every vertex of the corresponding polyhedron, which is combinatorially equivalent to a cube if~$n=4d$, projects to a unique vertex of the two-dimensional polygon.
In particular, it has exponentially many vertices which are preserved under the projection, i.e., no vertex projects into the interior of the shadow and no two vertices get projected to the same vertex.
This property turns out to be crucial for our proof and does not hold for 
previous extended formulations, such as the one of Ben-Tal and Nemirovski~\cite{ben2001polyhedral} yielding a regular $n$-gon (see Figure~\ref{fig:Ben-Tal--Nemirovski}). 

\subparagraph*{Related work on extended formulations.}

Extended formulations were first introduced in the seminal work of Yannakakis~\cite{Yannakakis91}.
One of their most popular practical applications is the approximation of the second-order cone using extended formulations of regular polygons.
The first such construction was given by Ben-Tal and Nemirovski \cite{ben2001polyhedral}.
This construction was slightly improved and generalized later~\cite{glineur,fiorini2012extended,kaibel2013constructing,VANDAELE2017217}.
In particular, Fiorini et al.~\cite{fiorini2012extended} introduced a general framework for constructing extended formulations based on symmetries called \emph{reflection relations}.

The open-source solver SCIP~\cite{scip7} included the extended formulations by Ben-Tal and Nemirovski~\cite{ben2001polyhedral} and by Glineur~\cite{glineur} up to version 7.
This allowed the construction of linear approximations that allow LP algorithms, such as the simplex method, to handle second-order conic constraints.
Although they are no longer included in the solver as of version 8, a number of works reported strong computational results using these constructions compared to the alternative approach of gradient-based separation, particularly when applied to robust or portfolio optimization problems~\cite{barmann2016polyhedral,Vielma2008}, which suggests a practical relevance of these constructions.

Variations of the construction by Ben-Tal and Nemirovski have also been used to linearly approximate the quadratic costs of the unit commitment problem in electrical power production. 
In this context, Angulo Cárdenas et al.~\cite{cardenas2016} introduced an extended formulation approximating a quadratic curve using an approach comparable to the reflection relations of~\cite{fiorini2012extended}.
Another variation on the construction of Ben-Tal and Nemirovski was used by Huiberts et al.~\cite{huiberts2023upper} to prove a lower bound on the smoothed complexity of the simplex method using the shadow vertex pivot rule. 

\section{Simplex and active-set method}\label{sec: simplex and active set}

The active-set method is a well-known and natural generalization of the simplex method to nonlinear objectives (see~Exercise 8.17 in~\cite{fletcher2000practical}, \cite{disser2025unconditional}).
We briefly outline both algorithms.

~ 

\noindent\textbf{The simplex method} solves linear programs of the form
\begin{equation}\tag{LP}\label{LP}
\begin{aligned}
    \max \quad & \vec{c}^{\top}\vec{x} \\
    \text{s.t.} \quad & A\vec{x} \leq \vec{b},
\end{aligned}
\end{equation}
where~$\vec c\in\Q^n$,~$A\in\Q^{m\times n}$, and~$\vec b\in\Q^m$.

The simplex method can be understood combinatorially as a traversal of~$\mathcal{P}=\{\vec x\colon A\vec x\leq \vec b\}$ along its edges, such that the objective is monotonically increasing.
The number of iterations it takes depends on the \emph{pivot rule} that governs which improving edge direction is chosen in each step in case there is more than one.
The question of whether or not there exists a pivot rule guaranteeing a polynomial running time of the simplex method is arguably one of the most famous open problems in (linear) optimization.
We refer to~\cite{dantzig1963linear} for more details on the simplex method.

~

\noindent\textbf{The active-set method} can be applied to nonlinear programs of the form
\begin{equation}\tag{NLP}\label{NLP}
\begin{aligned}
    \max \quad & f(\vec x) \\
    \text{s.t.} \quad & A\vec{x} \leq \vec{b},
\end{aligned}
\end{equation}
where~$f\colon\R^n\to\R$ is continously differentiable,~$A\in\Q^{m\times n}$, and~$\vec b\in\Q^m$.
As before, we denote the feasible region by~$\mathcal{P}$. 
The set of indices of the constraints that are active in some~$\vec{x}\in\mathcal{P}$ is denoted by~$\Eq(\vec x)$.
For simplicity, we assume that~$\mathcal{P}$ is bounded.

We use a formulation of the active-set method by Disser and Mosis~\cite{disser2025unconditional} (see Algorithm~\ref{alg:AS}), which they proved to be a natural generalization of the simplex method and the active set method for concave QPs to non-concave maximization programs.

\begin{figure}[htb]
  \centering
    \begin{algorithm}[H]
    \begin{minipage}{\dimexpr\textwidth-10pt} 
        \setstretch{1.2} 
     
	    \label{alg:AS}\caption{\textsc{ActiveSet}($f$, $A$, $\vec b$, $\vec{x}$)}
	    \textbf{input: }$f\colon\R^n\to\R$ continuously differentiable,\;
	    \Indp point~$\vec{x}$ of polytope~$\mathcal P\coloneqq\{\vec y\colon A\vec y\leq \vec b\}$\;
	    \Indm \textbf{output: }critical point~$\vec x$ of~$\max\{f(\vec y)\colon\vec y\in\mathcal{P}\}$\;
	    \vgap
	    \hrule  
	    \vgap
	    
	    $\mathcal{A}\gets\Eq(\vec x)$\tcp*{active set} 
        \SetInd{4pt}{4pt} 
	    
	    \While{$\mathcal{D}\coloneqq\{\vec{d}\colon{A_{\Eq(\vec x)\cdot}}\vec d\leq \vec 0,~\vec{\nabla\!f(\vec{x})}\!^{\top}\vec{d}> 0\}\neq\emptyset$}
	    {
	    \vspace{2pt}
		 take $\vec{ d}\in\mathcal{D}$ maximizing~$|\{i\in\mathcal{A}\colon \vec{A_{i\cdot}}\vec{ d}=0\}|$\; 	     
	     
	     \If{$A_{\mathcal{A}\cdot}\vec{d}\neq\vec 0$}{
	     $\mathcal{A}\gets\mathcal{A}\setminus\{i\}$ for some~$i\in\mathcal{A}$ with~$\vec{A_{i\cdot}}\vec{d}<0$\;	
		 
	     }{}
	     \If{$A_{\mathcal{A}\cdot}\vec{d}=\vec 0$}{	     
	     ${\mu}\gets\inf\{\mu\geq 0\colon \vec{x}+\mu\vec{d}\notin\mathcal{P}\text{ or } \vec{\nabla\!f(x}+\mu \vec{d)}\!^{\top}\vec d\leq 0\}$\;
	     
	     $\vec{x}\gets\vec{x}+\mu\vec{d}$\;
	     
	     \If{$\vec{\nabla\!f(\vec{x})}\!^{\top}\vec{d}> 0$}
	     {
	     $\mathcal{A} \gets \mathcal{A}\cup\{j\}$ for some~$j\in\Eq(\vec{x})\setminus\mathcal{A}$\;
	     }    	 	
	     }    
	    }
    \end{minipage}
	\end{algorithm}
\end{figure}

The running time of the algorithm, i.e., the number of iterations, again highly depends on the \emph{pivot rule} that determines the direction~$\vec{d}$ and the indices~$i\in\mathcal{A}$ and~$j\in\Eq(\vec{x})\setminus\mathcal{A}$ in each iteration in case there is more than one choice.
Disser and Mosis~\cite{disser2025unconditional} proved that \textsc{ActiveSet} does not admit any polynomial time pivot rule for objective functions that are polynomials of degree~$\omega(\log n)$, but the existence of efficient pivot rules was open for lower polynomial degrees.
We improve this result to an exponential lower bound for maximizing a convex polynomial of degree~$2$.

\section{An unconditional exponential bound for active-set}\label{sec: exponential bound}

The starting point of our lower bound construction is the observation that it is easy to force the active-set method (or the simplex method) to visit roughly half of the vertices of a polygon in~$\R^2$ via a suitably chosen linear objective.
However, this only yields a trivial linear lower bound on the number of iterations, since the polygon is encoded by a number of inequalities, corresponding to facets, which is linear in the number of vertices.
This approach clearly is not viable for our purposes.
Instead we resort to an extended formulation of a polygon with many vertices using a polyhedron with more dimensions but exponentially fewer constraints.

Using an extended formulation, however, poses the challenge that we introduce many additional edges between vertices that are not adjacent in the projection.
In particular, the active-set method (or simplex) is no longer constrained to move along the boundary of the projected polygon, and may, depending on the choice of the pivot rule, even move to vertices projecting to its interior, which makes it much harder to control adversarially.
Known extended formulations, such as the one of Ben-Tal and Nemirovski~\cite{ben2001polyhedral} do not preserve all vertices under projection (see Figure~\ref{fig:Ben-Tal--Nemirovski}). 

We overcome this issue by carefully designing a novel extended formulation whose vertices project into some suited parabola (Section~\ref{sec:extended_form}).
By convexity of the parabola, this means in particular that none of its vertices are projected to the interior of the resulting polygon.

In addition to constructing a suitable extended formulation, we need to choose an objective function that prevents improving moves along chords\footnote{A \emph{chord} is a line segment connecting two vertices of a polygon that otherwise lies entirely in its interior.} of the projected polygon, which may correspond to edges of the preimage (i.e., the extended formulation). 
We construct a convex quadratic maximization objective with this property in~\Cref{sec:lower_bound}.

\subsection{An extended formulation approximating a parabola\label{sec:extended_form}}

We construct a convex polyhedron that has a two-dimensional projection
closely approximating the set $\conv\{(x,\, x^2-x) : x \in [0,1]\}$.
In contrast to previous constructions via \emph{reflection relations} introduced by Kaibel and Pashkovich~\cite{kaibel2011constructing}, such as the one by Huiberts, Lee and Zhang~\cite{huiberts2023upper}, our construction is based on the notion of \emph{deformed products} as introduced by Amenta and Ziegler~\cite{amenta1999deformed}.

This operation requires polyhedra that are compatible in the following sense.

\begin{definition}\label{def:comb equivalence}
    Two polytopes~$\mathcal{P}$, $\mathcal{Q}$ are \emph{combinatorially equivalent} if there is a bijection between their vertex sets given by~$\{\vec{p^{(1)}},\dots,\vec{p^{(m)}}\} = \operatorname{vert}(\mathcal{P})$ and~$\{\vec{q^{(1)}},\dots,\vec{q^{(m)}}\} = \operatorname{vert}(\mathcal{Q})$ such that the convex hull~$\conv\{\vec{p^{(i)}} : i \in I\}$ is a face of~$\mathcal{P}$ if and only if~$\conv\{\vec{q^{(i)}} : i \in I\}$ is a face of~$\mathcal{Q}$ for every~$I \subseteq \{1,\dots,m\}$.
\end{definition}

\begin{definition}    
    Two combinatorially equivalent polytopes are \emph{normally equivalent} if the facet normals of corresponding facets coincide.
\end{definition}

We use the following characterization of deformed products and refer to~\cite{amenta1999deformed} for a detailed introduction.
Note that, for the first item, we assume that the vertices of $\mathcal{V}$ and $\mathcal{W}$ are labeled as per the bijection in Definition~\ref{def:comb equivalence}.

\begin{theorem}[Theorem 3.4 in \cite{amenta1999deformed}]\label{lem:AZ}
    Let $\mathcal{P} \subseteq \R^d$ be a $d$-dimensional polytope,
    $\varphi\colon \R^d \to \R$ be a linear function
    with $\varphi(\mathcal{P}) = [0,1]$,
    and let $\mathcal{V},\, \mathcal{W} \subseteq \R^r$
    be normally equivalent $r$-dimensional polytopes.
    \begin{enumerate}
        \item\label{lem:AZ_item1} For $\mathcal{P} = \conv\{\vec{p^{(1)}},\dots,\vec{p^{(m)}}\}$,
        $\mathcal{V} = \conv\{\vec{v^{(1)}},\dots,\vec{v^{(n)}}\}$ and
        $\mathcal{W} = \conv\{\vec{w^{(1)}},\dots,\vec{w^{(n)}}\}$,
        the deformed product is given by
        \begin{equation}
            (\mathcal{P},\varphi) \bowtie (\mathcal{V}, \mathcal{W}) =
        \conv\left\{\begin{pmatrix}\vec{p^{(i)}} \\ \vec{v^{(j)}} + \varphi(\vec{p^{(i)}})(\vec{w^{(j)}} - \vec{v^{(j)}})\end{pmatrix} : i \in \{1,\dots,m\}, j \in \{1,\dots,n\}\right\}.
        \end{equation}
        
        \item\label{lem:AZ_item2} If $\mathcal{P}$, $\mathcal{V}$ and $\mathcal{W}$ are given by
        \begin{equation*}
            \mathcal{P} = \{\vec{x} \in \R^d : A \vec{x} \leq \vec{\alpha}\}, \quad
            \mathcal{V} = \{\vec{u} \in \R^r : B \vec{u} \leq \vec{\beta}\}, \quad 
            \mathcal{W} = \{\vec{u} \in \R^r : B \vec{u} \leq             \vec{\beta'}\},
        \end{equation*}
        the deformed product is
        \begin{equation}
            (\mathcal{P},\varphi) \bowtie (\mathcal{V}, \mathcal{W}) =
            \left\{
            \begin{pmatrix} \vec{x} \\ \vec{u} \end{pmatrix} \in \R^{d+r}
            : A\vec{x} \leq \vec{\alpha}, \, (\vec{\beta} - \vec{\beta'})\,\varphi(\vec{x}) + B\vec{u} \leq \vec{\beta}
            \right\}.
        \end{equation}
        
        \item\label{lem:AZ_item3} The deformed product $(\mathcal{P},\varphi) \bowtie (\mathcal{V}, \mathcal{W})$ is combinatorially equivalent to $\mathcal{P} \times \mathcal{V}$ and to $\mathcal{P} \times \mathcal{W}$.
    \end{enumerate}
\end{theorem}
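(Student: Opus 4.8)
The plan is to take the convex-hull formula in item~\ref{lem:AZ_item1} as the definition of $(\mathcal P,\varphi)\bowtie(\mathcal V,\mathcal W)$ and to derive the halfspace description and the combinatorial equivalence from it. The linchpin is an \emph{interpolation lemma} for the fibers. Writing $\beta,\beta'$ for the right-hand sides of the two normally equivalent polytopes $\mathcal V=\{u:Bu\le\beta\}$ and $\mathcal W=\{u:Bu\le\beta'\}$, I set, for $t\in[0,1]$,
\[
\mathcal U_t:=\{u\in\R^r : Bu\le(1-t)\beta+t\beta'\},
\]
and claim that each $\mathcal U_t$ is normally equivalent to $\mathcal V$ with vertices $v^{(j)}_t:=(1-t)\vec{v^{(j)}}+t\,\vec{w^{(j)}}$. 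One half is immediate: since $\vec{v^{(j)}}$ and $\vec{w^{(j)}}$ are picked out by the same tight index set $S_j$ (this is exactly what normal equivalence buys us), the point $v^{(j)}_t$ is tight on $S_j$ and feasible, because $B_{k\cdot}v^{(j)}_t=(1-t)B_{k\cdot}\vec{v^{(j)}}+t\,B_{k\cdot}\vec{w^{(j)}}\le(1-t)\beta_k+t\beta'_k$ for every row $k$. That these are \emph{all} the vertices --- equivalently, that the normal fan does not change along the segment --- I would obtain from convexity of the deformation (type) cone consisting of all right-hand sides $c$ for which $\{u:Bu\le c\}$ has the normal fan of $\mathcal V$; it contains both endpoints $\beta,\beta'$ and hence the whole segment.

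With the interpolation lemma in hand, item~\ref{lem:AZ_item2} follows by showing the two sets agree. Write $D$ for the convex hull in item~\ref{lem:AZ_item1} and $Q$ for the halfspace-described set, and abbreviate $t_i:=\varphi(\vec{p^{(i)}})$, so the $(i,j)$-generator of $D$ is $(\vec{p^{(i)}},v^{(j)}_{t_i})$. For $D\subseteq Q$ I just check each generator against the constraints: $A\vec{p^{(i)}}\le\vec\alpha$ holds since $\vec{p^{(i)}}\in\mathcal P$, and
\[
(\beta-\beta')\,t_i+B\,v^{(j)}_{t_i}\le(\beta-\beta')t_i+(1-t_i)\beta+t_i\beta'=\beta .
\]
For $Q\subseteq D$, take $(\vec x,\vec u)\in Q$, decompose $\vec x=\sum_i\lambda_i \vec{p^{(i)}}$ in $\mathcal P$ so that $\varphi(\vec x)=\sum_i\lambda_i t_i=:\bar t$, and use the interpolation lemma to write $\vec u=\sum_j\mu_j v^{(j)}_{\bar t}$ inside the fiber $\mathcal U_{\bar t}$. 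The key observation is that the \emph{product} weights $\theta_{ij}:=\lambda_i\mu_j$ express $(\vec x,\vec u)$ as a convex combination of the generators: the first block is clearly $\vec x$, and a short computation collapsing $\sum_i\lambda_i(1-t_i)=1-\bar t$ and $\sum_i\lambda_i t_i=\bar t$ recovers $\vec u=\sum_j\mu_j v^{(j)}_{\bar t}$. Hence $D=Q$.

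For the combinatorial equivalence in item~\ref{lem:AZ_item3}, I would compute, for an arbitrary objective $(\vec x,\vec u)\mapsto\vec\ell^{\top}\vec x+\vec\psi^{\top}\vec u$, which face of $Q$ it exposes. Maximizing first over the fiber $\mathcal U_{\varphi(\vec x)}$: because all fibers share the normal fan of $\mathcal V$, the maximizing vertices are indexed by the same set $J(\vec\psi)\subseteq\{1,\dots,n\}$ for every fiber, and the optimal fiber value is the affine function $\varphi(\vec x)\mapsto c_0+s\,\varphi(\vec x)$, where $c_0$ is the maximum of $\vec\psi$ over $\mathcal V$ and the slope $s:=\vec\psi^{\top}(\vec{w^{(j)}}-\vec{v^{(j)}})$ is well defined, i.e.\ independent of $j\in J(\vec\psi)$, since $\vec\psi$ is maximized on corresponding faces of both $\mathcal V$ and $\mathcal W$. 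Maximizing the remaining $\vec\ell^{\top}\vec x+s\,\varphi(\vec x)=(\vec\ell+s\,\nabla\varphi)^{\top}\vec x$ over $\mathcal P$ exposes the face of $\mathcal P$ indexed by some $I\subseteq\{1,\dots,m\}$. Thus the exposed face of $Q$ has vertex set $\{(\vec{p^{(i)}},v^{(j)}_{t_i}) : i\in I,\ j\in J(\vec\psi)\}$, whose index set $I\times J(\vec\psi)$ is exactly that of the product face (face $I$ of $\mathcal P$) $\times$ (face $J(\vec\psi)$ of $\mathcal V$). As $(\vec\ell,\vec\psi)$ ranges over all functionals these product index sets exhaust precisely the faces of $\mathcal P\times\mathcal V$, so the identity map on index pairs $(i,j)$ is the required combinatorial isomorphism; swapping the roles of $\beta$ and $\beta'$ (i.e.\ $t\leftrightarrow 1-t$) gives equivalence to $\mathcal P\times\mathcal W$.

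The main obstacle I expect is the interpolation lemma, specifically the assertion that no new vertices are created as $t$ moves across $[0,1]$: the affine interpolation of the known vertices is easy, but ruling out combinatorial changes of the fiber is exactly the point at which normal equivalence (rather than mere combinatorial equivalence) is indispensable, and making the type-cone convexity argument precise --- handling degenerate bases and the boundary between strict and non-strict tightness --- is the step that requires the most care. Once the fibers are pinned down, both the product-weight identity $\theta_{ij}=\lambda_i\mu_j$ and the fiberwise exposing-functional computation go through mechanically.
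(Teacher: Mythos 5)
There is nothing to compare against in the paper itself: this statement is quoted as Theorem~3.4 of Amenta and Ziegler \cite{amenta1999deformed} and used as a black box, with the reader explicitly referred to that paper for details. Your proposal therefore has to stand on its own, and as a self-contained proof strategy it is correct. Taking the convex-hull formula of item (i) as the definition, your interpolation lemma --- that each fiber $\mathcal U_t=\{\vec{u}\in\R^r: B\vec{u}\le(1-t)\vec{\beta}+t\vec{\beta'}\}$, $t\in[0,1]$, is normally equivalent to $\mathcal V$ with vertices $(1-t)\vec{v^{(j)}}+t\,\vec{w^{(j)}}$ --- is exactly the right linchpin. The generator check gives $D\subseteq Q$; the product weights $\theta_{ij}=\lambda_i\mu_j$ give $Q\subseteq D$ (the collapsing computation works precisely because the fiber part of each generator is affine in $t_i$); and the fiberwise exposing-functional analysis, combined with the fact that every face of a polytope is exposed, identifies the faces of the deformed product with the index sets $I\times J(\vec{\psi})$, i.e., with the faces of $\mathcal P\times\mathcal V$, and symmetrically of $\mathcal P\times\mathcal W$ after swapping $t\leftrightarrow 1-t$.

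The one step you leave as a citation --- that no new vertices appear in the fibers, which you attribute to convexity of the type cone --- can be closed elementarily. If $S_j$ is the common tight index set of $\vec{v^{(j)}}$ and $\vec{w^{(j)}}$ (common by normal equivalence, as you note), then $B_{S_j}$ has rank $r$ and the interpolated point is feasible and tight exactly on $S_j$ for $t\in(0,1)$, so it is a vertex of $\mathcal U_t$. Conversely, any generic $\vec{\psi}\in\R^r$ lies in the interior of the normal cone of some $\vec{v^{(j)}}$, so $\vec{\psi}=\sum_{k\in S_j}\lambda_k B_{k\cdot}^{\top}$ with all $\lambda_k>0$; then for every $\vec{u}\in\mathcal U_t$ one gets $\vec{\psi}^{\top}\vec{u}\le\sum_{k\in S_j}\lambda_k\bigl((1-t)\vec{\beta}+t\vec{\beta'}\bigr)_k$, and equality forces $B_{S_j}\vec{u}=\bigl((1-t)\vec{\beta}+t\vec{\beta'}\bigr)_{S_j}$, whose unique solution is the interpolated point. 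Since every vertex of $\mathcal U_t$ is the unique maximizer of some generic functional, all vertices of $\mathcal U_t$ are interpolated vertices. With this filled in, your argument is a complete and correct proof of all three items, in the same fiber-over-the-normal-fan spirit as the original, but organized from the V-description rather than from deformed-product programs.
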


\begin{figure}
\centering
\begin{tikzpicture}
  \begin{axis}[
    width=12cm,
    height=6cm,
    xlabel={$x_1$},
    ylabel={$x_2$},
    xlabel style={at={(axis description cs:0.5,0.1)}, anchor=north},
ylabel style={at={(axis description cs:0.1,0.5)}, anchor=south},
    xtick={0,1},
    xticklabels={$0$,$1$},
    ytick={-0.25,0},
    yticklabels={$-0.25$, $0$},
    grid=both,
    legend pos=south east,
    legend style={xshift=9pt},
    axis on top,
    clip=false,
  ]

    \addplot [domain=0:1, samples=100, thin, black] {x^2 - x};

    \addplot [color=lightpurple, mark=*, very thick]
      coordinates {
        (0/79, 0)
        (19/79, {(19/79)^2 - (19/79)})
        (20/79, {(20/79)^2 - (20/79)})
        (39/79, {(39/79)^2 - (39/79)})
        (40/79, {(40/79)^2 - (40/79)})
        (59/79, {(59/79)^2 - (59/79)})
        (60/79, {(60/79)^2 - (60/79)})
        (79/79, 0)
      } -- cycle;

    \addplot [color=myorange, mark=square*, very thick]
      coordinates {
        (9/79, {(9/79)^2 - (9/79)})
        (10/79, {(10/79)^2 - (10/79)})
        (29/79, {(29/79)^2 - (29/79)})
        (30/79, {(30/79)^2 - (30/79)})
        (49/79, {(49/79)^2 - (49/79)})
        (50/79, {(50/79)^2 - (50/79)})
        (69/79, {(69/79)^2 - (69/79)})
        (70/79, {(70/79)^2 - (70/79)})
      } -- cycle;

    \legend{$x^2 - x$, $\mathcal{V}_{10,8}$, $\mathcal{W}_{10,8}$}

    \node[lightpurple, font=\small, anchor=south east, xshift=0pt, yshift=-3pt] at (axis cs:0,0) {$v_{0,0}$};
    \node[lightpurple, font=\small, anchor=east, xshift=-2pt, yshift=6pt] at (axis cs:19/79,{(19/79)^2 - (19/79)-0.015}) {$v_{0,1}$};
    \node[lightpurple, font=\small, anchor=east] at (axis cs:20/79,{(20/79)^2 - (20/79)-0.015}) {$v_{1,0}$};
    \node[lightpurple, font=\small, anchor=north east, xshift=2pt, yshift=8pt] at (axis cs:39/79,{(39/79)^2 - (39/79)-0.015}) {$v_{1,1}$};
    \node[lightpurple, font=\small, anchor=north west, xshift=0pt, yshift=8pt] at (axis cs:40/79,{(40/79)^2 - (40/79)-0.015}) {$v_{2,0}$};
    \node[lightpurple, font=\small, anchor=west] at (axis cs:59/79,{(59/79)^2 - (59/79)-0.015}) {$v_{2,1}$};
    \node[lightpurple, font=\small, anchor=west, xshift=2pt, yshift=6pt] at (axis cs:60/79,{(60/79)^2 - (60/79)-0.015}) {$v_{3,0}$};
    \node[lightpurple, font=\small, anchor=south west, xshift=0pt, yshift=-3pt] at (axis cs:1,0) {$v_{3,1}$};

    \node[myorange, font=\small, anchor=east, xshift=-2pt, yshift=6pt] at (axis cs:9/79,{(9/79)^2 - (9/79)-0.01}) {$w_{0,0}$};
    \node[myorange, font=\small, anchor=east] at (axis cs:10/79,{(10/79)^2 - (10/79)-0.01}) {$w_{0,1}$};
    \node[myorange, font=\small, anchor=north east, xshift=-2pt, yshift=9pt] at (axis cs:29/79,{(29/79)^2 - (29/79)-0.01}) {$w_{1,0}$};
    \node[myorange, font=\small, anchor=north east, xshift=4pt, yshift=2pt] at (axis cs:30/79,{(30/79)^2 - (30/79)-0.01}) {$w_{1,1}$};
    \node[myorange, font=\small, anchor=north west, xshift=-4pt, yshift=2pt] at (axis cs:49/79,{(49/79)^2 - (49/79)-0.01}) {$w_{2,0}$};
    \node[myorange, font=\small, anchor=north west, xshift=2pt, yshift=9pt] at (axis cs:50/79,{(50/79)^2 - (50/79)-0.01}) {$w_{2,1}$};
    \node[myorange, font=\small, anchor=west] at (axis cs:69/79,{(69/79)^2 - (69/79)-0.01}) {$w_{3,0}$};
    \node[myorange, font=\small, anchor=west, xshift=2pt, yshift=6pt] at (axis cs:70/79,{(70/79)^2 - (70/79)-0.01}) {$w_{3,1}$};

  \end{axis}
\end{tikzpicture}
\caption{Polygons $\mathcal{V}_{M,N}$ and $\mathcal{W}_{M,N}$ for $M=10$, $N=8$.\label{fig:V_and_W}}
\end{figure}
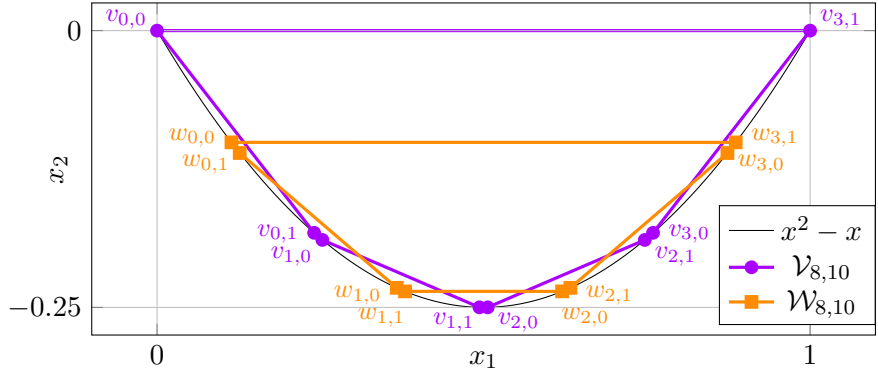

Our construction is based on two polygons, each with~$N$ mutually distinct vertices of the form~$\vec{h}(x) \coloneqq \spmatrix{x \\ x^2 - x}$ for~$x \in [0,1]$.
For even~$N \in \N$ and some $M \in \N$, we define the polygons (see Figure~\ref{fig:V_and_W})
\begin{align*}
  \mathcal{V}_{M,N} &\coloneqq \conv\{\vec{v_{j,\ell}} : j \in \{0,\dots, N/2-1\},\ell \in \{0,1\}\},\\
  \mathcal{W}_{M,N} &\coloneqq \conv\{\vec{w_{j,\ell}} : j \in \{0,\dots, N/2-1\},\ell \in \{0,1\}\},
\end{align*}
with
\begin{align*}
  \vec{v_{j,\ell}} \equiv \vec{v_{j,\ell}}(M,N) &\coloneqq \vec{h}\!\left(\frac{2M(j + \ell) - \ell}{MN - 1}\right),\\
  \vec{w_{j,\ell}} \equiv \vec{w_{j,\ell}}(M,N) &\coloneqq \vec{h}\!\left(\frac{M(2j + 1) - (1 - \ell)}{MN - 1}\right).
\end{align*}

Observe that, since all vertices of~$\mathcal{V}_{M,N}$ and~$\mathcal{W}_{M,N}$ are unique and lie on the convex curve $x \mapsto x^2-x$, the polygon~$\conv(\mathcal{V}_{M,N}\cup \mathcal{W}_{M,N})$ has exactly $2N$ vertices.
Furthermore, we show that~$\mathcal{V}_{M,N}$ and~$\mathcal{W}_{M,N}$ are suited for taking deformed products.

\begin{lemma}
    The polygons~$\mathcal{V}_{M,N}$ and~$\mathcal{W}_{M,N}$ are normally equivalent for even~$M \geq 2$ and~$N \geq 4$.
\end{lemma}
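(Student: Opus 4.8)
The plan is to exploit that every vertex of both polygons lies on the strictly convex curve $\Gamma \coloneqq \{\vec{h}(x) : x \in \R\}$ with $\vec{h}(x) = \spmatrix{x \\ x^2 - x}$, which pins down both the combinatorial type and every edge normal purely in terms of the scalar parameters $x$. First I would record these parameters explicitly. Clearing the common denominator $MN-1$, the numerators $(MN-1)x$ defining $\vec{v_{j,\ell}}$ are $2Mj$ (for $\ell=0$) and $2Mj+2M-1$ (for $\ell=1$), while those defining $\vec{w_{j,\ell}}$ are $2Mj+M-1$ and $2Mj+M$, for $j\in\{0,\dots,N/2-1\}$. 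Under the hypotheses ($M\geq 2$ even, $N\geq 4$) these are $N$ distinct values in $[0,1]$, strictly increasing in the order $\vec{v_{0,0}},\vec{v_{0,1}},\vec{v_{1,0}},\dots$ (and analogously for $\mathcal{W}$). I would then relabel the vertices of each polygon by increasing $x$-coordinate as $\vec{v^{(0)}},\dots,\vec{v^{(N-1)}}$ and $\vec{w^{(0)}},\dots,\vec{w^{(N-1)}}$, so that index $k=2j+\ell$ corresponds to the pair $(j,\ell)$.

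Second, I would fix the face structure. Since $\Gamma$ is strictly convex, any finite set of distinct points on it is in convex position, so both polygons are convex $N$-gons; and convexity of $x\mapsto x^2-x$ forces the \emph{same} incidence pattern for each: the edges are exactly the $N-1$ lower chords $\{\vec{v^{(k)}},\vec{v^{(k+1)}}\}$ between consecutive vertices, together with the single upper chord $\{\vec{v^{(0)}},\vec{v^{(N-1)}}\}$ joining the two extreme vertices (every other vertex lies strictly below that chord, by convexity). The index-preserving bijection $\vec{v^{(k)}}\mapsto\vec{w^{(k)}}$ thus matches edges to edges, yielding the combinatorial equivalence of Definition~\ref{def:comb equivalence}.

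Third, and this is the crux, I would compute the edge normals. For $a\neq b$ the chord through $\vec{h}(a)$ and $\vec{h}(b)$ has direction $\vec{h}(b)-\vec{h}(a) = (b-a)\,(1,\,a+b-1)^{\top}$, so its outer normal is proportional to $(a+b-1,\,-1)^{\top}$ for a lower edge (and to $(1-a-b,\,1)^{\top}$ for the upper edge). In particular the normal depends \emph{only} on the sum $a+b$ of the two parameters. Hence normal equivalence reduces to checking that corresponding edges of $\mathcal{V}_{M,N}$ and $\mathcal{W}_{M,N}$ have equal parameter sums, which is a short case analysis: for a lower edge $\{2j,2j+1\}$ both sums equal $(4Mj+2M-1)/(MN-1)$; for a lower edge $\{2j+1,2j+2\}$ both equal $(4Mj+4M-1)/(MN-1)$; and for the upper edge $\{0,N-1\}$ both equal $1$ (here $N-1$ is odd, so it is the $\ell=1$ vertex of block $j=N/2-1$ in each polygon). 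Matching sums give matching outer normals, proving the claim.

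The computations are elementary, so the only real care lies in the bookkeeping: getting the within-polygon vertex ordering right (immediate from the strict inequalities among the numerators once $M\geq 2$) and, above all, handling the wrap-around upper edge, where one must track that the rightmost vertex is $\vec{v_{N/2-1,1}}$ (resp.\ $\vec{w_{N/2-1,1}}$) rather than a block-$0$ vertex. I expect no genuine obstacle beyond this parity/indexing check, since the observation that the normal is a function of $a+b$ alone collapses the statement to three one-line identities.
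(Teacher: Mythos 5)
Your proof is correct and follows essentially the same route as the paper's: both order the vertices along the convex curve to get the index-preserving bijection $\vec{v_{j,\ell}}\mapsto\vec{w_{j,\ell}}$, both reduce the edge-normal comparison to the observation that the slope of a chord through $\vec{h}(a)$ and $\vec{h}(b)$ equals $a+b-1$ (so it depends only on the parameter sum), and both verify the same three cases, including the wrap-around edge. Your write-up is in fact slightly more careful than the paper's in one respect: you track the orientation of the \emph{outer} normals (downward for the lower-hull edges, upward for the top edge), whereas the paper only compares slopes and leaves the matching of orientations implicit.
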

\begin{proof}
    Since the vertices of~$\mathcal{V}_{M,N}$ and~$\mathcal{W}_{M,N}$ lie on the convex curve $x \mapsto x^2-x$, they can be ordered along their respective boundaries by increasing first coordinates, where~$(\vec{v_{j,0}})_1 < (\vec{w_{j,0}})_1 < (\vec{w_{j,1}})_1 < (\vec{v_{j,1}})_1 < (\vec{v_{j+1,0}})_1$.
    We obtain the bijection mapping~$\vec{v_{j,\ell}}$ to $\vec{w_{j,\ell}}$, which trivially preserves the faces (i.e., vertices and edges) of the polygons.

    To prove that the polygons are normally equivalent, we need to verify that the slopes of their edges coincide.
    The slope of an edge between two vertices $\vec u$ and $\vec v$ is given by~$\operatorname{slope}(\vec{u}; \vec{v}) \coloneqq (v_2-u_2)/(v_1-u_1)$.
    For two vertices of the form~$\vec{h}(x)$ and~$\vec{h}(y)$ with $x,y \in \R$, we obtain 
    \begin{equation*}
        \operatorname{slope}(\vec{h}(x); \vec{h}(y)) = \frac{(y^2 - y) - (x^2 - x)}{y - x} = x + y - 1.
    \end{equation*}

    \noindent With this, we calculate, for all~$j \in \{0,\dots,N/2-1\}$,
    \begin{align*}
      \operatorname{slope}(\vec{v_{j,0}};\vec{v_{j,1}}) &= \frac{2Mj + 2M(j+1)-1}{MN-1} - 1\\
      &= \frac{M(2j+1) - 1 + M(2j+1)}{MN-1} - 1\\
      &= \operatorname{slope}(\vec{w_{j,0}};\vec{w_{j,1}}).
    \end{align*}
    Analogously, for all~$j \in \{0,\dots,N/2-2\}$,
    \begin{align*}
      \operatorname{slope}(\vec{v_{j,1}};\vec{v_{j+1,0}}) &= 
      \frac{2M(j+1)-1 + 2M(j+1)}{MN-1} - 1\\
      &= \frac{M(2j+1) + M(2j+3) - 1}{MN-1} - 1\\
      &= \operatorname{slope}(\vec{w_{j,1}};\vec{w_{j+1,0}}).
    \end{align*}
    Finally, we have
    \begin{align*}
      \operatorname{slope}(\vec{v_{N/2-1,1}};\vec{v_{0,0}}) &= 
      \frac{(NM - 1) + 0}{MN-1} - 1\\
      &= \frac{M(N-1) + (M - 1)}{MN-1} - 1\\
      &= \operatorname{slope}(\vec{w_{N/2-1,1}};\vec{w_{0,0}}).\qedhere
    \end{align*}
\end{proof}

For our construction, we fix~$n,d \in \N$ such that~$d$ and~$n/(2d) \geq 2$ are even integers.
We recursively define polyhedra~$\mathcal{Q}_i \subseteq \R^i$ for even values of~$i$, together with a projection~$(\varphi_i,\varphi'_i)\colon \R^i \to \R^2$, as follows.
\begin{align}
    \mathcal{Q}_2 &:= \conv(\mathcal{V}_{2, \frac{n}{2d}} \cup \mathcal{W}_{2, \frac{n}{2d}}), \\
    \varphi_2( \vec x) &:= x_1, \\
    \varphi_2'( \vec x) &:= x_2.
\end{align}
For all $i\in\{2,4,\dots,d-2\}$, we let $M_i \coloneqq (n/d)^{i/2}$, $N \coloneqq n/d$, and define
\begin{align}
    \mathcal{Q}_{i+2} &:= (\mathcal{Q}_i, \varphi_i) \bowtie (\mathcal{V}_{M_i,N},\, \mathcal{W}_{M_i,N}),\label{def:Q_deformed_prod} \\
    \varphi_{i+2}(\vec x) &:=x_{i+1},\label{def:Q_proj1} \\
    \varphi_{i+2}'(\vec x) &:= x_{i+2} + \left(\frac{M_i-1}{M_{i+2}-1}\right)^2\varphi_i'((x_1,\dots,x_i)^{\top}).\label{def:Q_proj2}
\end{align}

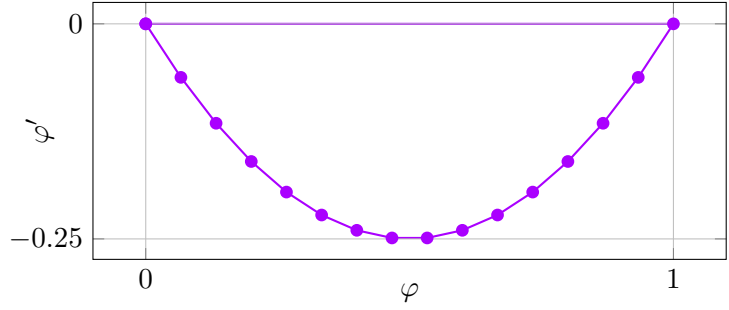
\begin{figure}
    \centering
    \begin{tikzpicture}
  \begin{axis}[
    width=10cm,
    height=5cm,
    xlabel={$\varphi$},
    ylabel={$\varphi'$},
    xlabel style={at={(axis description cs:0.5,0.1)}, anchor=north},
    ylabel style={at={(axis description cs:0.1,0.5)}, anchor=south},
    xtick={0,1},
    xticklabels={$0$,$1$},
    ytick={-0.25,0},
    yticklabels={$-0.25$,$0$},
    grid=both,
    axis on top,
    clip=false
  ]

    \definecolor{lightpurple}{RGB}{170,0,255}

    \addplot [color=lightpurple, mark=*, thick]
      coordinates {
        (0/15,{0/15^2 - 0/15})
        (1/15,{(1/15)^2 - (1/15)})
        (2/15,{(2/15)^2 - (2/15)})
        (3/15,{(3/15)^2 - (3/15)})
        (4/15,{(4/15)^2 - (4/15)})
        (5/15,{(5/15)^2 - (5/15)})
        (6/15,{(6/15)^2 - (6/15)})
        (7/15,{(7/15)^2 - (7/15)})
        (8/15,{(8/15)^2 - (8/15)})
        (9/15,{(9/15)^2 - (9/15)})
        (10/15,{(10/15)^2 - (10/15)})
        (11/15,{(11/15)^2 - (11/15)})
        (12/15,{(12/15)^2 - (12/15)})
        (13/15,{(13/15)^2 - (13/15)})
        (14/15,{(14/15)^2 - (14/15)})
        (15/15,{(15/15)^2 - (15/15)})
        (0/15,{0/15^2 - 0/15}) 
      };

  \end{axis}
\end{tikzpicture}
    \caption{Projection of~$\mathcal{Q}_i$ to the plane $(\varphi_i(\vec x),\, \varphi_i'(\vec x))$ for $n = 8$, $d = 2$, $i = 4$.\label{fig:projection_of_Q}}
\end{figure}

We prove that~$\mathcal{Q}_i$ has an exponential number of vertices with respect to its description size (i.e., its number of facets), and that it projects to the parabola, as desired (see Figure~\ref{fig:projection_of_Q}).

\begin{lemma}\label{lem:Q}
For every $i\in\{ 2,4,\dots,d\}$, we have
    \begin{enumerate}
        \item $\varphi_i(\mathcal{Q}_i) = [0,1]$,\label{lem:Q_item1}
        \item $\mathcal{Q}_i\subseteq\R^i$ has $\frac{in}{2d}$ facets and $(n/d)^{i/2}=M_i$ vertices,\label{lem:Q_item2}
        \item $\varphi_i$ and $\varphi'_i$ define the projection\label{lem:Q_item3} 
        \begin{equation}
        \{(\varphi_i(\vec x),\, \varphi_i'(\vec x)) : \vec x \in \mathcal{Q}_i\} = 
        \conv\left\{ \spmatrix{\frac{t}{M_i-1}, & \left(\frac{t}{M_i-1}\right)^2-\frac{t}{M_i-1}} : t\in\{0,\dots,M_i-1\}\right\}.
        \label{eq:projection}\end{equation}
        This projection is orthogonal.
    \end{enumerate}
\end{lemma}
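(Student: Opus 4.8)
The plan is to prove all three items simultaneously by induction on the even index~$i$, because items~\ref{lem:Q_item1} and~\ref{lem:Q_item3} feed into one another and item~\ref{lem:Q_item1} is exactly the hypothesis $\varphi(\mathcal P)=[0,1]$ required to invoke \Cref{lem:AZ} at the next level. For the base case $i=2$ I would simply unwind the definitions of $\vec{v_{j,\ell}}$ and $\vec{w_{j,\ell}}$ with $M=2$ and $N=n/(2d)$: the numerators produced by $\vec{v_{j,0}},\vec{w_{j,0}},\vec{w_{j,1}},\vec{v_{j,1}}$ are $4j,\,4j+1,\,4j+2,\,4j+3$, which sweep out every integer from $0$ to $MN-1=n/d-1=M_2-1$. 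Hence the $2N=n/d$ vertices of $\mathcal Q_2$ are exactly $\vec h(t/(M_2-1))$ for $t\in\{0,\dots,M_2-1\}$, giving items~\ref{lem:Q_item1} and~\ref{lem:Q_item3}; item~\ref{lem:Q_item2} follows since a convex polygon with $n/d$ vertices on the parabola has $n/d$ edges.

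For the inductive step I would first check that $\mathcal V_{M_i,N}$ and $\mathcal W_{M_i,N}$ are normally equivalent, so that $\mathcal Q_{i+2}$ is a well-defined deformed product: $M_i=(n/d)^{i/2}$ is even (as $n/d=2\cdot n/(2d)$ is an even integer) and $N=n/d\geq 4$, so the preceding normal-equivalence lemma applies. Item~\ref{lem:Q_item2} then drops out of \Cref{lem:AZ}\ref{lem:AZ_item3}: $\mathcal Q_{i+2}$ is combinatorially equivalent to $\mathcal Q_i\times\mathcal V_{M_i,N}$, whose vertex count is $(n/d)^{i/2}\cdot N=(n/d)^{(i+2)/2}$ and whose facet count is (facets of $\mathcal Q_i$) plus (edges of $\mathcal V_{M_i,N}$), i.e.\ $\tfrac{in}{2d}+\tfrac{n}{d}=\tfrac{(i+2)n}{2d}$, using the inductive count for $\mathcal Q_i$.

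The technical core is item~\ref{lem:Q_item3}. Combining the inductive forms of items~\ref{lem:Q_item2} and~\ref{lem:Q_item3}, I would first establish a vertex-preservation statement: the $M_i$ vertices of $\mathcal Q_i$ project into $\conv\{\vec h(t/(M_i-1))\}$, which itself has exactly $M_i$ vertices, and since every vertex of a projected polytope is the image of a vertex, the projection restricts to a bijection from $\operatorname{vert}(\mathcal Q_i)$ onto $\{\vec h(t/(M_i-1)):t\in\{0,\dots,M_i-1\}\}$. Thus each vertex $\vec p$ of $\mathcal Q_i$ has $\varphi_i(\vec p)=t/(M_i-1)$ and $\varphi_i'(\vec p)=(t/(M_i-1))^2-t/(M_i-1)$ for a unique $t$. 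By \Cref{lem:AZ}\ref{lem:AZ_item1} a vertex of $\mathcal Q_{i+2}$ has last two coordinates $\vec{v_{j,\ell}}+\varphi_i(\vec p)(\vec{w_{j,\ell}}-\vec{v_{j,\ell}})$, so writing $s:=\varphi_i(\vec p)$ and letting $a,b$ be the first coordinates of $\vec{v_{j,\ell}},\vec{w_{j,\ell}}$, the identity $M_iN-1=M_{i+2}-1$ gives $b-a=(1-2\ell)(M_i-1)/(M_{i+2}-1)$ and hence
\[
  \varphi_{i+2}(\vec p,j,\ell)=a+s(b-a)=\frac{2M_i(j+\ell)-\ell+t(1-2\ell)}{M_{i+2}-1}.
\]
Splitting on $\ell\in\{0,1\}$ shows that for each fixed $j$ the numerator sweeps the block $\{2M_ij,\dots,2M_ij+2M_i-1\}$, and these blocks tile $\{0,\dots,M_iN-1\}=\{0,\dots,M_{i+2}-1\}$, so the first coordinate runs bijectively over $\{t'/(M_{i+2}-1):t'\in\{0,\dots,M_{i+2}-1\}\}$. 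It then remains to verify that each vertex lands on the parabola, i.e.\ that its $\varphi_{i+2}'$-value equals $X^2-X$ for $X:=\varphi_{i+2}(\vec p,j,\ell)$; expanding $\varphi_{i+2}'=x_{i+2}+\bigl(\tfrac{M_i-1}{M_{i+2}-1}\bigr)^2\varphi_i'(\vec p)$ with $x_{i+2}=(a^2-a)+s((b^2-b)-(a^2-a))$ and $\varphi_i'(\vec p)=s^2-s$, and writing $\delta:=b-a$, the difference $\varphi_{i+2}'-(X^2-X)$ collapses to $\bigl(\tfrac{M_i-1}{M_{i+2}-1}\bigr)^2(s^2-s)+s\delta^2-s^2\delta^2$. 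The decisive point, and the step I expect to be the main obstacle to pin down exactly, is that $(1-2\ell)^2=1$ forces $\delta^2=\bigl(\tfrac{M_i-1}{M_{i+2}-1}\bigr)^2$, so all three terms share the same coefficient and cancel to $0$ — this is precisely what the correction term in the definition of $\varphi_{i+2}'$ was built to achieve. Taking convex hulls of the vertex images then yields \eqref{eq:projection}.

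Item~\ref{lem:Q_item1} for $i+2$ is immediate afterwards, since the first projection coordinate attains $0$ (at $t'=0$) and $1$ (at $t'=M_{i+2}-1$) and $\varphi_{i+2}(\mathcal Q_{i+2})$ is the interval spanned by the extreme vertex values. Orthogonality follows by tracking gradients: $\nabla\varphi_{i+2}=\vec e_{i+1}$ is supported on the odd coordinate $i+1$, whereas the recursion $\nabla\varphi_{i+2}'=\vec e_{i+2}+\bigl(\tfrac{M_i-1}{M_{i+2}-1}\bigr)^2\nabla\varphi_i'$ with base value $\nabla\varphi_2'=\vec e_2$ shows that $\nabla\varphi_{i+2}'$ is supported on even coordinates. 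Hence $\langle\nabla\varphi_{i+2},\nabla\varphi_{i+2}'\rangle=0$, so the two defining functionals point in orthogonal directions and the projection is orthogonal, completing the induction.
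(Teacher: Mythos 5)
Your proof is correct, and it follows the same overall skeleton as the paper's: induction over even~$i$, with Theorem~\ref{lem:AZ} supplying the vertex description and the counting, and the normal-equivalence lemma checked for $M_i$ even, $N\geq 4$. Where you genuinely depart from the paper is in the execution of item~\ref{lem:Q_item3}, the technical core. The paper maintains two separate inductive statements (every $t\in\{0,\dots,M_i-1\}$ is attained by some vertex; every vertex lies on the parabola), constructs for a \emph{given} target $t$ the pair $(j,\ell)$ and the auxiliary integer $s$ of~\eqref{eq:def s}, and then verifies $\varphi_{i+2}'(\vec p)=\varphi_{i+2}(\vec p)^2-\varphi_{i+2}(\vec p)$ by a page of brute-force expansion in Appendix~\ref{app:proof_of_3.5}. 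You instead pull the counting argument of Corollary~\ref{cor:vertex bijection} into the induction to get a per-level vertex bijection, run the correspondence \emph{forward} (each vertex of $\mathcal{Q}_{i+2}$ is indexed by a vertex of $\mathcal{Q}_i$ and a pair $(j,\ell)$, and the resulting numerators tile $\{0,\dots,M_{i+2}-1\}$ in blocks of length $2M_i$, which replaces the paper's explicit decomposition~\eqref{eq:def s} by a surjectivity-by-counting argument), and, most notably, you verify the parabola identity abstractly: with $\delta=b-a$ and $C=\bigl(\tfrac{M_i-1}{M_{i+2}-1}\bigr)^2$, the discrepancy collapses to $C(s^2-s)+s\delta^2-s^2\delta^2$, which vanishes because $(1-2\ell)^2=1$ forces $\delta^2=C$. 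This is exactly the cancellation hidden inside the paper's appendix computation, but your formulation isolates it in two lines and makes transparent \emph{why} the correction coefficient in the definition of $\varphi_{i+2}'$ must be the square $\bigl(\tfrac{M_i-1}{M_{i+2}-1}\bigr)^2$; the paper's version obscures this behind mechanical algebra. Both arguments are complete and yield~\eqref{eq:projection}; yours would make a shorter and more illuminating write-up of the same proof.
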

\begin{proof}
    We prove \ref{lem:Q_item1} via induction on~$i$. For $i=2$, we have
    \begin{equation*}
        \varphi_2( \conv(\mathcal{V}_{M,N} \cup \mathcal{W}_{M,N})) = \varphi_2(\conv\{\vec{v_{0,0}}, \vec{v_{N/2-1,1}}\}) = \conv\{\varphi_2(\vec{v_{0,0}}), \varphi_2(\vec{v_{N/2-1,1}})\} = [0,1].
    \end{equation*} 
    Now, for~$i+2$, by induction, the assumption of Theorem~\ref{lem:AZ} holds for the deformed product in~\eqref{def:Q_deformed_prod}.
    Since~$\mathcal{V}_{M_i,N}, \mathcal{W}_{M_i,N} \subseteq \R^2$, Theorem~\ref{lem:AZ} \ref{lem:AZ_item1} and~\eqref{def:Q_proj1} yield that every value in~$\varphi_{i+2}(\mathcal{Q}_{i+2})$ is given by the convex combination of first coordinates of all vectors of the form~$\vec{v_{j,\ell}} + \varphi_i(\vec{p})(\vec{w_{j,\ell}} - \vec{v_{j,\ell}})$ with~$\vec{p} \in \mathcal{Q}_i$.
    All vectors of this form have first coordinates between 0 and 1 since the first coordinates of all~$\vec{v_{j,\ell}}$ and~$\vec{w_{j,\ell}}$ lie in this interval and, by induction,~$\varphi_i(\mathcal{Q}_i)=[0,1]$.
    By induction, there is a choice of~$\vec{p}$ with~$\varphi_i(\vec{p}) = 0$, which means that we only need to identify vectors of the form~$\vec{v_{j,\ell}}$ with first coordinates~$0$ and~$1$, respectively.
    The vectors~$\vec{v_{0,0}}$ and~$\vec{v_{N/2-1,1}}$ are a suitable choice.

    For \ref{lem:Q_item2}, note that, by Theorem~\ref{lem:AZ} and \ref{lem:Q_item1}, $\mathcal{V}_{M,N}, \mathcal{W}_{M,N} \subseteq \R^2$ inductively implies~$\mathcal{Q}_i \subseteq\R^i$.
    Since~$\mathcal{V}_{M,N}$ and~$\mathcal{W}_{M,N}$ are polygons with~$N$ vertices, they can each be expressed by systems of~$N$ inequalities.
    Theorem~\ref{lem:AZ} \ref{lem:AZ_item2} yields that each deformed product in~\eqref{def:Q_deformed_prod} adds~$N$ inequalities and thus facets to the description of~$\mathcal{Q}_i$, yielding~$iN/2 = \frac{in}{2d}$ facets overall.
    Last, Theorem~\ref{lem:AZ} \ref{lem:AZ_item3} implies that each deformed product multiplies the number of vertices by~$N$, for a total number of~$N^{i/2}=(n/d)^{i/2}$ vertices of~$\mathcal{Q}_i$.

    For \ref{lem:Q_item3}, first observe that the right-hand side of~\eqref{eq:projection}
    has the same number of vertices as~$\mathcal{Q}_i$.
    It thus suffices to prove the following two statements
    for~$i\in\{2,4,\dots,d\}$:
    \begin{itemize}
        \item for every~$t \in\{ 0,1,\dots,M_i-1\}$ there exists a vertex~$\vec q \in \mathcal{Q}_i$ such that~$\varphi_i(\vec q) = \frac{t}{M_i-1}$,
    \item for every vertex~$\vec q \in \mathcal{Q}_i$ we have~$\varphi_i'(\vec q) = \varphi_i(\vec q)^2 - \varphi_i(\vec q)$.
    \end{itemize}
    We prove these by induction.
    The statements are true for~$i=2$ by inspection.
    Now, assuming they are true for~$i$, we prove them for~$i+2$.
    Let~$t \in \{0,1,\dots,M_{i+2}-1\}$ and denote by~$k \in \mathbb N$ the unique number such that~$kM_i \leq t \leq (k+1)M_i-1$.
    Observe that~$k \leq N-1$ as~$NM_i=M_{i+2}$.

    We write~$k=2j+\ell$, where~$j\in\N$ and~$\ell\in\{0,1\}$, such that~$(2j+\ell)M_i \leq t \leq (2j+\ell+1) M_i - 1$.
    Further, we define
    \begin{equation}\label{eq:def s}
    s \coloneqq (1-\ell)(t - 2jM_i)+\ell((2j+2)M_i-1-t) \in \Z
    \end{equation}
    and note that~$0 \leq s \leq M_i-1$.
    Thus, by the induction hypothesis, there exists
    a vertex~$\vec q \in \mathcal{Q}_i$ with~$\varphi_i(\vec q) = \frac{s}{M_i-1}$.
    By Theorem~\ref{lem:AZ} we know that~$\vec p \coloneqq \spmatrix{\vec q \\ \vec{v_{j,\ell}} + \varphi_i(\vec q)(\vec{w_{j,\ell}}-\vec{v_{j,\ell}})}$
    is a vertex of~$Q_{i+2}$ and it satisfies
    \begin{align*}
    \varphi_{i+2}(\vec p) &=\varphi_2(\vec{v_{j,\ell}}) + \varphi_i(\vec q)(\varphi_2(\vec{w_{j,\ell}})-\varphi_2(\vec{v_{j,\ell}}))) \\ 
    &= \frac{2M_i(j+\ell)-\ell}{M_{i+2}-1}+\frac{s}{M_i-1}\cdot\frac{(M_i-1)(1-2\ell)}{M_{i+2}-1} \\
    &= \frac{t}{M_{i+2}-1},      
    \end{align*}
    which concludes the induction for the first statement.

    Note that~$\mathcal{Q}_{i+2}$ has~$M_{i+2}$ vertices by~\ref{lem:Q_item2}.
    Thus, the first inductive statement yields a bijection between the integers from~$0$ to~$M_{i+2}-1$ and the vertices of~$\mathcal{Q}_{i+2}$.
    Since in the argument above~$t \in \{0,1,\dots,M_{i+2}-1\}$ was chosen arbitrarily and~$\vec p$ is the vertex of~$\mathcal{Q}_{i+2}$ corresponding to~$t$, for the second statement it  suffices to prove
    \begin{equation}\label{eq:appendix}
    \varphi_{i+2}'(\vec p) = \varphi_{i+2}(\vec p)^2 - \varphi_{i+2}(\vec p).        
    \end{equation}
    We defer this technical part of the proof to Appendix~\ref{app:proof_of_3.5}.

    We are left to prove that the projection $\vec x \mapsto (\varphi_i(\vec x), \varphi'_i(\vec x))$ is orthogonal. For this, denote the standard orthonormal basis by $\vec e_1,\dots,\vec e_i \in \mathbb{R}^i$.
    Note that $\varphi_i(\vec x) = \vec e_{i-1}^\top \vec x$,
    but $\vec e_{i-1} \in \ker(\varphi_i')$ since the latter
    only depends on the even-indexed components of $\vec x$.
\end{proof}

By setting~$i=d$ in Lemma~\ref{lem:Q}, we can extract the main result of this section. This proves \Cref{thm:extended parabola} (restated Theorem~\ref{thm:main2}).

\begin{theorem}\label{thm:extended parabola}
    For every~$n,\,\! d \in \mathbb N$ such that $d$ and~$n/(2d) \geq 2$ are even integers, there exists a poly\-tope~$\mathcal{P} \subseteq \R^d$ with~$n/2$ facets and~$\smash{M\coloneqq\left(n/d\right)^{d/2}}$ vertices, and an orthogonal projection~$\Pi\colon\R^d\to\R^2$ with
    \[
    \Pi(\mathcal{P})=\conv\left\{\begin{pmatrix} \frac{t}{M-1}, \left(\frac{t}{M-1}\right)^2 - \frac{t}{M-1}
    \end{pmatrix}\colon t\in\{0,1,\ldots,M-1\}\right\}.
    \]
\end{theorem}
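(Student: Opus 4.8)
The plan is to obtain the theorem as the special case $i = d$ of Lemma~\ref{lem:Q}, so there is no new construction or computation to perform. First I would set $\mathcal{P} \coloneqq \mathcal{Q}_d$ and $\Pi \coloneqq (\varphi_d, \varphi'_d)$. I would then observe that the hypotheses of the theorem --- namely that $d$ is even and $n/(2d) \geq 2$ is an even integer --- are exactly the standing assumptions under which the recursive definitions of $\mathcal{Q}_i$ and of the projections $(\varphi_i,\varphi'_i)$ make sense and to which Lemma~\ref{lem:Q} applies. Hence $\mathcal{Q}_d$ and $(\varphi_d,\varphi'_d)$ are well defined and Lemma~\ref{lem:Q} holds at $i = d$.

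Next I would simply read off the three conclusions of Lemma~\ref{lem:Q} at $i = d$. Part~\ref{lem:Q_item2} yields that $\mathcal{Q}_d \subseteq \R^d$ has $\frac{dn}{2d} = n/2$ facets and $(n/d)^{d/2} = M$ vertices, matching the claimed counts. Part~\ref{lem:Q_item3} yields that $(\varphi_d,\varphi'_d)$ is an orthogonal projection whose image equals $\conv\{(\tfrac{t}{M_d-1}, (\tfrac{t}{M_d-1})^2 - \tfrac{t}{M_d-1}) : t \in \{0,\dots,M_d-1\}\}$. Since $M_d = (n/d)^{d/2} = M$, we have $M_d - 1 = M - 1$, so this set is precisely the parabola polygon $\Pi(\mathcal{P})$ appearing in the theorem statement. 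Part~\ref{lem:Q_item1}, that $\varphi_d(\mathcal{Q}_d) = [0,1]$, is not strictly needed for the statement but confirms that the first projection coordinate sweeps the full unit interval.

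There is essentially no obstacle at this level: the entire substance resides in Lemma~\ref{lem:Q}, whose proof already establishes the facet and vertex counts through Theorem~\ref{lem:AZ} and the projection identity by induction on $i$. The only points requiring any care are the bookkeeping substitutions at $i = d$ --- confirming $\frac{dn}{2d} = n/2$, identifying the vertex count $(n/d)^{d/2}$ with the declared quantity $M$, and checking that the $M_d - 1$ occurring in the projection formula of Lemma~\ref{lem:Q} coincides with the $M-1$ in the theorem's parabola points --- all of which are immediate. I would therefore present this final statement as a short corollary rather than a proof with independent content.
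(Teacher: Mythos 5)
Your proposal is correct and matches the paper's own treatment exactly: the paper likewise obtains Theorem~\ref{thm:extended parabola} by setting $i=d$ in Lemma~\ref{lem:Q}, with $\mathcal{P}=\mathcal{Q}_d$, $\Pi=(\varphi_d,\varphi'_d)$, and $M_d=(n/d)^{d/2}=M$, giving the facet count $n/2$, the vertex count $M$, and the claimed orthogonal projection onto the parabola polygon. The bookkeeping substitutions you identify are exactly the content of the paper's (implicit) proof, so nothing is missing.
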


We emphasize that none of the vertices of the extended formulation project to the interior of its shadow.

\begin{corollary}\label{cor:vertex bijection}
    Consider the polytope~$\mathcal{P}$ and the projection~$\Pi$ of Theorem~\ref{thm:extended parabola}.
    There is a bijection between the vertices of~$\mathcal{P}$ and the vertices of~$\Pi(\mathcal{P})$.
\end{corollary}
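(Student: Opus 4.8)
The plan is to show that the bijection can be taken to be the projection~$\Pi$ itself, restricted to the vertex set of~$\mathcal{P}$, so that essentially all the work is already contained in the proof of \Cref{lem:Q}. First I would pin down the vertices of the shadow. By \Cref{thm:extended parabola}, $\Pi(\mathcal{P})$ is the convex hull of the~$M$ points~$\vec{h}\!\left(\tfrac{t}{M-1}\right)$ for~$t \in \{0,\dots,M-1\}$, where~$\vec h(x) = \spmatrix{x \\ x^2-x}$. These points have pairwise distinct first coordinates and all lie on the strictly convex curve~$x \mapsto x^2 - x$; hence each of them is an extreme point of their convex hull, and none lies in the convex hull of the others. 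Consequently~$\Pi(\mathcal{P})$ is a polygon with \emph{exactly}~$M$ vertices, namely these points.

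Next I would invoke the proof of \Cref{lem:Q} with~$i = d$, so that~$\mathcal{P} = \mathcal{Q}_d$. Item~\ref{lem:Q_item2} gives that~$\mathcal{P}$ has exactly~$M = (n/d)^{d/2}$ vertices, and the two inductive statements proved for item~\ref{lem:Q_item3} assert that (a) the map~$\vec q \mapsto \varphi_d(\vec q)$ takes the vertices of~$\mathcal{P}$ onto every value in~$\{\tfrac{t}{M-1} : t \in \{0,\dots,M-1\}\}$, and (b) every vertex~$\vec q$ satisfies~$\varphi_d'(\vec q) = \varphi_d(\vec q)^2 - \varphi_d(\vec q)$. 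Statement (b) says precisely that~$\Pi(\vec q) = \vec h(\varphi_d(\vec q))$ for every vertex~$\vec q$, so~$\Pi$ carries each vertex of~$\mathcal{P}$ onto the parabola, i.e., to one of the~$M$ vertices of~$\Pi(\mathcal{P})$ identified in the previous paragraph.

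It then remains to argue that this restricted map is a bijection, which is a short counting argument. Statement (a) shows that~$\vec q \mapsto \varphi_d(\vec q)$ is surjective onto a set of~$M$ distinct values; since~$\mathcal{P}$ has only~$M$ vertices, a surjection between finite sets of equal size is a bijection, so distinct vertices of~$\mathcal{P}$ receive distinct~$\varphi_d$-values. Composing with the injective map~$x \mapsto \vec h(x)$ then makes~$\Pi$ injective on the vertices of~$\mathcal{P}$, and by (a) it is surjective onto the vertex set of~$\Pi(\mathcal{P})$; an injection between two~$M$-element sets is a bijection. I do not expect a genuine obstacle here: the only point requiring care is that a vertex of~$\mathcal{P}$ must not project into the interior of the shadow and that no two vertices may collapse to the same point, and both are ruled out automatically because the images all lie on a strictly convex arc with pairwise distinct abscissae. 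In this sense the corollary is an immediate consequence of the structure already established in the proof of \Cref{lem:Q}.
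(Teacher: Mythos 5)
Your proof is correct, but it takes a different route from the paper's. The paper's proof is a two-line top-down argument: it invokes the general polytope-theoretic fact that every face of the shadow $\Pi(\mathcal{P})$ is the projection of a face of $\mathcal{P}$, so in particular every vertex of $\Pi(\mathcal{P})$ is the image of some vertex of $\mathcal{P}$, and then concludes by counting, since both polytopes have $M$ vertices. You instead argue bottom-up from the construction: you reuse the two inductive statements established \emph{inside} the proof of \Cref{lem:Q} (surjectivity of $\varphi_d$ on vertices onto the $M$ abscissae $\tfrac{t}{M-1}$, and the identity $\varphi_d'(\vec q) = \varphi_d(\vec q)^2 - \varphi_d(\vec q)$ for every vertex $\vec q$), combine them with the strict convexity of $x \mapsto x^2 - x$ to identify the shadow's vertices, and then count. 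Both arguments hinge on the same $M$-to-$M$ counting step, and your counting is sound: surjectivity of $\varphi_d$ from an $M$-element vertex set onto $M$ distinct values forces injectivity, even without knowing a priori that every vertex lands in that value set. What the paper's route buys is brevity and portability --- it needs no details of the construction beyond the vertex counts, only the standard fact about faces of projections (which it cites implicitly without proof). What your route buys is self-containedness and explicitness: you avoid that unproven general fact, you exhibit the bijection concretely as $\Pi$ restricted to the vertex set, and you make visible exactly the property the paper emphasizes after the corollary --- that no vertex projects into the interior of the shadow and no two vertices collapse --- as a consequence of the images lying on a strictly convex arc with distinct abscissae. The only stylistic caveat is that you cite statements internal to another proof rather than a stated lemma, but since those statements are genuinely proved there, this is legitimate.
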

\begin{proof}    
Every face of the shadow~$\Pi(\mathcal{P})$ is the projection of a face of the polytope~$\mathcal{P}$.
In particular, every vertex of the shadow is the projection of a vertex of~$\mathcal{P}$, which defines a bijection as both polytopes have~$M$ vertices.
\end{proof}

\subsection{Active-set on the extended parabola\label{sec:lower_bound}}
Consider the polytope~$\mathcal{P}\subseteq \R^d$ and the orthogonal projection~$\Pi=(\varphi_d,\varphi_d')\colon\R^d\to\R^2$ of Theorem~\ref{thm:extended parabola} with $n = 4d$.
Then~$\mathcal{P}$ is combinatorially equivalent to a cube.
Further, by an orthogonal change of variables, we may assume that the projection coordinates~$(\varphi_d(\vec x),\varphi_d'(\vec x))$ correspond to the first two coordinates~$(x_1,x_2)$.
We prove that \textsc{ActiveSet} is exponential for the convex quadratic maximization problem
\begin{equation}\tag{QP}\label{QP}
\begin{aligned}
    \max \quad & x_1^2-c\cdot x_1-x_2 \\
    \text{s.t.} \quad & \vec x\in\mathcal{P},
\end{aligned}
\end{equation}
where~$c\coloneqq1-\frac{3}{2M-2}$.
Denote the vertices of the projection~$\Pi(\mathcal{P})$ by
\begin{align}\label{eq:vertices}
    \vec{x^{(t)}} = \frac{1}{M-1}\begin{pmatrix} t,\, \frac{t^2}{M-1} - t
\end{pmatrix}^\top\in\R^2,
\end{align}
for~$t\in\{0,1,\ldots,M-1\}$.

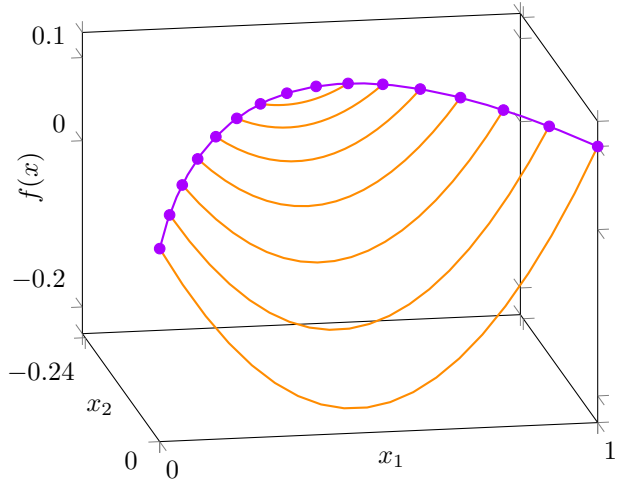
\begin{figure}
\centering
  \begin{tikzpicture}[scale=1, font=\small]
\begin{axis}[
  view={170}{20},
  xlabel={$x_1$}, ylabel={$x_2$}, zlabel={$f(x)$},
  xtick={0,1}, ytick={-0.24,0}, ztick={-0.2,0,0.1},
  xlabel style={at={(axis description cs:0.6,0.0)}, anchor=north},
  ylabel style={at={(axis description cs:0.08,0.08)}, anchor=east},
  zlabel style={at={(axis description cs:-0.1,0.7)}, anchor=east},
  tick align=outside,
  x dir=reverse,
  grid=none,
]

\addplot3[lightpurple, thick, mark=none, samples=20, samples y=1, domain=0:1]
({x}, {x^2 - x}, {x^2 - (1 - 3/(2*16-2))*x - (x^2 - x)});

\addplot3[only marks, mark size=2pt, mark=*, mark options={fill=lightpurple,draw=lightpurple}, draw=none]
coordinates {
(0.00000,  0.00000,  0.00000)
(0.06667, -0.06222,  0.0066)
(0.13333, -0.11556,  0.01333)
(0.20000, -0.16000,  0.02)
(0.26667, -0.19556,  0.02667)
(0.33333, -0.22222,  0.03333)
(0.40000, -0.24000,  0.04)
(0.46667, -0.24889,  0.04667)
(0.53333, -0.24889,  0.05333)
(0.60000, -0.24000,  0.06)
(0.66667, -0.22222,  0.06667)
(0.73333, -0.19556,  0.07333)
(0.80000, -0.16000,  0.08)
(0.86667, -0.11556,  0.08667)
(0.93333, -0.06222,  0.09333)
(1.00000,  0.00000,  0.1)
};

\addplot3[myorange, thick, mark=none, samples=20, samples y=1, domain=0:1]
({x}, {0}, {x^2 - (1 - 3/(2*16-2))*x});

\addplot3[myorange, thick, mark=none, samples=20, samples y=1, domain=0.06667:0.93333]
({x}, {-0.06222}, {x^2 - (1 - 3/(2*16-2))*x + 0.06222});

\addplot3[myorange, thick, mark=none, samples=20, samples y=1, domain=0.13333:0.8666]
({x}, {-0.1155}, {x^2 - (1 - 3/(2*16-2))*x + 0.1155});

\addplot3[myorange, thick, mark=none, samples=20, samples y=1, domain=0.2:0.8]
({x}, {-0.16}, {x^2 - (1 - 3/(2*16-2))*x + 0.16});

\addplot3[myorange, thick, mark=none, samples=20, samples y=1, domain=0.2666:0.7333]
({x}, {-0.1955}, {x^2 - (1 - 3/(2*16-2))*x + 0.1955});

\addplot3[myorange, thick, mark=none, samples=20, samples y=1, domain=0.3333:0.6666]
({x}, {-0.2222}, {x^2 - (1 - 3/(2*16-2))*x + 0.2222});

\addplot3[myorange, thick, mark=none, samples=20, samples y=1, domain=0.4:0.6]
({x}, {-0.24}, {x^2 - (1 - 3/(2*16-2))*x + 0.24});

\end{axis}
\end{tikzpicture}
\caption{Objective function values along vertices and edges of the projection $\Pi(\mathcal{P})$ in \eqref{QP} for~$n=M=16$, $d=4$. Purple edges are monotonically increasing in objective value, while the orange edge and chords have negative gradients at both endpoints.\label{fig:QP}}
\end{figure}

We show that there are no improving chords in the projection (see Figure~\ref{fig:QP}).

\begin{lemma}\label{lem:improving directions}
    At every vertex~$\vec{x^{(t)}}$ of~$\Pi(\mathcal{P})$ with~$t<M-1$, and for all~$k\in\Z$ with~$0\leq t+k\leq M-1$, the direction~$(\vec{x^{(t+k)}}-\vec{x^{(t)}})$ is improving with respect to the objective~$f(x_1,x_2)=x_1^2-c\cdot x_1-x_2$ if and only if~$k=1$.
\end{lemma}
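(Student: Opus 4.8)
The plan is to reduce the statement to a single sign computation. In the active-set framework a direction $\vec d$ is \emph{improving} at a point $\vec x$ exactly when the directional derivative $\nabla f(\vec x)^{\top}\vec d$ is strictly positive, so it suffices to determine, as a function of the integer $k$, the sign of $\nabla f(\vec{x^{(t)}})^{\top}(\vec{x^{(t+k)}}-\vec{x^{(t)}})$. First I would compute the gradient of the objective $f(x_1,x_2)=x_1^2-c\,x_1-x_2$, namely $\nabla f(x_1,x_2)=(2x_1-c,\,-1)^{\top}$, and evaluate it at $\vec{x^{(t)}}$, whose first coordinate equals $t/(M-1)$.

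Next I would write the chord direction explicitly from \eqref{eq:vertices}: the difference $\vec{x^{(t+k)}}-\vec{x^{(t)}}$ has first coordinate $k/(M-1)$ and second coordinate $\tfrac{1}{M-1}\big(\tfrac{2tk+k^2}{M-1}-k\big)$. Forming the inner product of this vector with the gradient, the key observation is that the two contributions proportional to $tk$ cancel, so the directional derivative is \emph{independent of $t$}. Substituting the defining value $1-c=\tfrac{3}{2(M-1)}$ then collapses the expression to
\[
\nabla f(\vec{x^{(t)}})^{\top}\big(\vec{x^{(t+k)}}-\vec{x^{(t)}}\big)=\frac{k(3-2k)}{2(M-1)^2}.
\]

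Finally I would read off the sign. Since $(M-1)^2>0$, the sign agrees with that of $k(3-2k)$, which is strictly positive exactly on the open interval $0<k<\tfrac{3}{2}$. The only integer there is $k=1$, and the hypothesis $t<M-1$ guarantees that $k=1$ is admissible, i.e.\ that $t+k\le M-1$ still holds. For $k=0$ the derivative vanishes, and for $k\ge 2$ as well as for $k\le -1$ it is strictly negative; this establishes both directions of the asserted equivalence.

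The algebra is entirely routine, so the substantive point --- and the step I would verify most carefully --- is the calibration of the constant $c$. The factor $3-2k$ vanishes at $k=\tfrac{3}{2}$, so the choice $c=1-\tfrac{3}{2(M-1)}$ is precisely what places the ``improving window'' $0<k<\tfrac{3}{2}$ so that it contains the integer $1$ and no other: a larger offset would also admit $k\ge 2$ (a shortcut along a longer chord), while a smaller one would fail to make the step to the neighbouring vertex $k=1$ improving. Confirming that this exact value isolates $k=1$ at every vertex simultaneously is the heart of the argument.
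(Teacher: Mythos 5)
Your proposal is correct and follows essentially the same route as the paper's proof: both compute the directional derivative $\nabla f(\vec{x^{(t)}})^{\top}(\vec{x^{(t+k)}}-\vec{x^{(t)}})$, observe that the $t$-dependent terms cancel, and reduce it to $\frac{k}{(M-1)^2}\bigl(\frac{3}{2}-k\bigr)$, which is positive exactly when $0<k<\frac{3}{2}$, i.e.\ $k=1$. Your closing remark on how $c$ calibrates the improving window is a nice addition but not a deviation from the paper's argument.
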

\begin{proof}    
Let~$t\in\{0,1,\ldots,M-2\}$ and~$k\in\Z$ with~$0\leq t+k\leq M-1$.
We have
\begin{align}\label{eq:directions}
    \vec{x^{(t+k)}} - \vec{x^{(t)}}
 &= \frac{1}{M-1}\begin{pmatrix} k,\,  \frac{(t+k)^2}{M-1} - (t+k) - \frac{t^2}{M-1} + t \end{pmatrix}^\top\notag \\
 &= \frac{k}{M-1}\begin{pmatrix} 1,\, \frac{2t+k}{M-1} - 1 \end{pmatrix}^\top.
\end{align}
Further, the gradient of the objective in~$\vec x\in\R^2$ is given by
\begin{align}\label{eq:gradient}
     \nabla f(\vec x) = \begin{pmatrix}
         2x_1 +\frac{3/2}{M-1}-1,\, -1
     \end{pmatrix}^\top.
\end{align}
By~\eqref{eq:vertices},~\eqref{eq:directions}, and~\eqref{eq:gradient} we get
\begin{align*}
    \nabla f(\vec{x^{(t)}})^\top(\vec{x^{(t+k)}} - \vec{x^{(t)}} )
    &=  \frac{k}{M-1} \left( \frac{2t}{M-1} - \frac{M-\frac{5}{2}}{M-1} - \frac{2t+k}{M-1} +1 \right) \\
    &= \frac{k}{(M-1)^2} \left( \frac{3}{2} - k \right).
\end{align*} 
We infer that
$
    \nabla f(\vec{x^{(t)}})^\top(\vec{x^{(t+k)}} - \vec{x^{(t)}} ) > 0 
$
if and only if $ 0 < k < \frac{3}{2} $ which finally implies $k=1$ since~$k\in\Z$. \qedhere
\end{proof}

We conclude that the improving direction in each step of \textsc{ActiveSet} for \eqref{QP} is uniquely determined. 
Therefore the behavior of \textsc{ActiveSet} does not depend on the choice of a pivot rule.

\begin{lemma}\label{lem:long hirsch path}
    At every non-optimal vertex of~$\mathcal{P}$, there is a unique improving edge with respect to the objective in~\eqref{QP}. 
    These improving edges form a path~$\mathcal{H}$ in the graph of~$\mathcal{P}$ that visits all vertices and projects to~$\Pi(\mathcal{H})=\left(\vec{x^{(0)}},\vec{x^{(1)}},\ldots,\vec{x^{(M-1)}}\right)$.
\end{lemma}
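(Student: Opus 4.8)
The plan is to transport the one-dimensional picture of Lemma~\ref{lem:improving directions} up to the full-dimensional polytope $\mathcal{P}$ via the vertex bijection of Corollary~\ref{cor:vertex bijection}. Write $\vec{q^{(t)}}$ for the vertex of $\mathcal{P}$ with $\Pi(\vec{q^{(t)}}) = \vec{x^{(t)}}$, for $t \in \{0,\dots,M-1\}$. The objective of~\eqref{QP}, viewed on $\mathcal{P}$, equals the pullback $f\circ\Pi$ of the planar objective $f$ from Lemma~\ref{lem:improving directions} under the \emph{linear} projection $\Pi$; hence, for any edge of $\mathcal{P}$ emanating from $\vec{q^{(t)}}$ towards an adjacent vertex $\vec{q^{(s)}}$ with edge direction $\vec d = \vec{q^{(s)}}-\vec{q^{(t)}}$, the chain rule gives
\[
\nabla(f\circ\Pi)(\vec{q^{(t)}})^\top \vec d = \nabla f(\vec{x^{(t)}})^\top\bigl(\vec{x^{(s)}}-\vec{x^{(t)}}\bigr).
\]
By Lemma~\ref{lem:improving directions}, the right-hand side is positive if and only if $s = t+1$ (for $t < M-1$). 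Consequently, at most one edge at $\vec{q^{(t)}}$ can be improving, namely a possible edge to $\vec{q^{(t+1)}}$, and this reduces the lemma to two facts: that $\vec{q^{(t)}}$ and $\vec{q^{(t+1)}}$ are genuinely adjacent in $\mathcal{P}$ for every $t < M-1$, and that $\vec{q^{(M-1)}}$ is optimal.

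The heart of the argument, and the step I expect to be the main obstacle, is the adjacency of consecutive vertices. I would prove by induction along the recursive construction that, for every even $i$, the vertices $\vec{q^{(t)}}$ and $\vec{q^{(t+1)}}$ are adjacent in $\mathcal{Q}_i$ for all $t \in \{0,\dots,M_i-2\}$. For the base case $i=2$, the polytope $\mathcal{Q}_2$ is a convex polygon whose vertices lie on the parabola and are indexed by increasing first coordinate $\varphi_2$, so consecutive indices are neighbors on its boundary. For the inductive step, Theorem~\ref{lem:AZ}~\ref{lem:AZ_item3} makes $\mathcal{Q}_{i+2}$ combinatorially equivalent to the product $\mathcal{Q}_i \times \mathcal{V}_{M_i,N}$, under the labeling matching the deformed-product vertex $(\vec q,\vec{v_{j,\ell}})$ with the product vertex $(\vec q,\vec{v_{j,\ell}})$. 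In a product, two vertices are adjacent precisely when they agree in one factor and are adjacent in the other. It therefore suffices to read off, from the index computation in the proof of Lemma~\ref{lem:Q}~\ref{lem:Q_item3}, that the $t$-labeling of $\mathcal{Q}_{i+2}$ is exactly the boustrophedon (snake) traversal of this product: writing $k = \lfloor t/M_i\rfloor = 2j+\ell$ and $r = t - kM_i$, the second factor is the polygon vertex $\vec{v_{j,\ell}}$ indexed by $k$, while the first factor is the $\mathcal{Q}_i$-vertex indexed by $s = r$ if $k$ is even and $s = M_i-1-r$ if $k$ is odd.

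Given this description, incrementing $t$ by one either keeps $k$ fixed and changes $s$ by $\pm 1$ (a within-block step: same vertex in the second factor, adjacent in $\mathcal{Q}_i$ by the inductive hypothesis), or keeps $s$ fixed and increments $k$ by one at a block boundary (same vertex in the first factor, with $\vec{v_{j,\ell}}$ and its successor being consecutive—hence adjacent—vertices of the polygon $\mathcal{V}_{M_i,N}$ ordered by increasing first coordinate). In both cases the two vertices agree in one factor and are adjacent in the other, so they are adjacent in the product and hence in $\mathcal{Q}_{i+2}$; note that the traversal never uses the wrap-around edge of the polygon, so only genuine neighbors $k, k+1$ are needed. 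Setting $i=d$ yields the adjacency of $\vec{q^{(t)}}$ and $\vec{q^{(t+1)}}$ in $\mathcal{P}$.

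It remains to assemble the path and verify optimality. For $t < M-1$ the edge to $\vec{q^{(t+1)}}$ exists and is improving, and by the first paragraph it is the unique improving edge; the collection of these edges is precisely the path $\vec{q^{(0)}} - \vec{q^{(1)}} - \dots - \vec{q^{(M-1)}}$, which is Hamiltonian and projects to $(\vec{x^{(0)}},\dots,\vec{x^{(M-1)}})$ as claimed. Since $f\circ\Pi$ is convex, an improving step strictly increases the objective, so $f\circ\Pi$ is strictly increasing along the path and $\vec{q^{(M-1)}}$ attains the maximum value over all vertices, hence over $\mathcal{P}$; thus $\vec{q^{(M-1)}}$ is the unique optimum and every other vertex is non-optimal with a unique improving edge. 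The same conclusion also follows directly, since the gradient identity $\nabla f(\vec{x^{(t)}})^\top(\vec{x^{(t+k)}}-\vec{x^{(t)}}) = \tfrac{k}{(M-1)^2}(\tfrac{3}{2} - k)$ from the proof of Lemma~\ref{lem:improving directions} is negative for all $k<0$, so no edge at $\vec{q^{(M-1)}}$ is improving.
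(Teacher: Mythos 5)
Your proposal is correct, and it splits the lemma the same way the paper does only for its first half: your chain-rule/pullback computation ruling out every step except $k=1$ is precisely the paper's argument, which phrases the same fact as invariance of the objective under $\Pi$ combined with Lemma~\ref{lem:improving directions} and the edge-to-segment injection from Corollary~\ref{cor:vertex bijection}. Where you genuinely diverge is the existence half, i.e., that $\vec{q^{(t)}}$ and $\vec{q^{(t+1)}}$ are actually adjacent in $\mathcal{P}$. The paper never argues this explicitly; it is implicit in the general lifting fact recorded in the proof of Corollary~\ref{cor:vertex bijection} (every face of the shadow $\Pi(\mathcal{P})$ is the projection of a face of $\mathcal{P}$): the edge of $\Pi(\mathcal{P})$ with endpoints $\vec{x^{(t)}},\vec{x^{(t+1)}}$ lifts to a face of $\mathcal{P}$ which, by the vertex bijection, contains exactly the two vertices $\vec{q^{(t)}},\vec{q^{(t+1)}}$ and is therefore an edge. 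You instead prove adjacency constructively, by induction along the deformed-product recursion, using the combinatorial equivalence $\mathcal{Q}_{i+2}\cong\mathcal{Q}_i\times\mathcal{V}_{M_i,N}$ from Theorem~\ref{lem:AZ}~\ref{lem:AZ_item3} together with the boustrophedon structure of the $t$-labeling read off from~\eqref{eq:def s}; your block-boundary check (that $s$ sticks at $M_i-1$, respectively $0$, when $k$ increments) is the one step that genuinely needed verification, and it is right. The trade-off: the paper's route is shorter and leans on standard projection theory, while yours is longer but self-contained and makes explicit the Hamiltonian snake-path structure of the deformed product --- which is the underlying reason such constructions yield long monotone paths, and which the paper leaves hidden inside the projection argument. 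One minor point of care in your version: Theorem~\ref{lem:AZ}~\ref{lem:AZ_item3} asserts combinatorial equivalence without naming the bijection, so you are implicitly using that it is realized by the natural correspondence $(\vec q,\vec{v_{j,\ell}}+\varphi_i(\vec q)(\vec{w_{j,\ell}}-\vec{v_{j,\ell}}))\leftrightarrow(\vec q,\vec{v_{j,\ell}})$ from item~\ref{lem:AZ_item1}, which is indeed how the equivalence is constructed in Amenta--Ziegler but is worth stating.
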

\begin{proof}
    By Corollary~\ref{cor:vertex bijection}, there is a bijection between the vertices of~$\mathcal{P}$ and the vertices of~$\Pi(\mathcal{P})$.
    Hence, there is an injective map from the set of edges of~$\mathcal{P}$ to the set of line segments between vertices of~$\Pi(\mathcal{P})$, i.e., from the edges of~$\mathcal{P}$ to the edges and chords of~$\Pi(\mathcal{P})$.
    Since the objective function~$f(\vec x)=x_1^2-c\cdot x_1-x_2$ is invariant under the projection~$\Pi$, i.e., we have~$f(\vec x)=f(\vec y)$ for all~$\vec{x},\,\vec y\in\R^d$ with~$(y_1,y_2)=\Pi(\vec x)^\top=(x_1,x_2)$, Lemma~\ref{lem:improving directions} yields the statement.
\end{proof}

We conclude our main result that the active-set method takes an exponential number of iterations for our construction, irrespective of pivot rule. 
This proves Theorem~\ref{thm:main}.

\begin{theorem}\label{thm:main_details}
    \textsc{ActiveSet}, when started at the unique vertex~$\vec {\bar x}\in\mathcal{P}$ with $\Pi(\vec {\bar x})=\vec{x^{(0)}}$, takes $2^{m/2}-1$ iterations, where~$m$ denotes the number of facets of~$\mathcal{P} \subseteq \R^{m/2}$.
\end{theorem}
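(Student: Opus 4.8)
The plan is to reduce the theorem to the structural facts already in hand --- that the improving edges of $\mathcal{P}$ form the unique Hamiltonian path $\mathcal{H}$ of Lemma~\ref{lem:long hirsch path} --- and then show that \textsc{ActiveSet} is forced to traverse $\mathcal{H}$ in full. First I would record the parameters: instantiating Theorem~\ref{thm:extended parabola} with $n = 4d$ yields $\mathcal{P} \subseteq \R^d$ with $m \coloneqq n/2 = 2d$ facets (so $\mathcal{P} \subseteq \R^{m/2}$) and $M \coloneqq (n/d)^{d/2} = 2^d = 2^{m/2}$ vertices. By Corollary~\ref{cor:vertex bijection}, the starting vertex $\vec{\bar x}$ with $\Pi(\vec{\bar x}) = \vec{x^{(0)}}$ is well defined. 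Since $n=4d$ makes each deformed-product factor a quadrilateral, Theorem~\ref{lem:AZ}\ref{lem:AZ_item3} gives that $\mathcal{P}$ is combinatorially a $d$-cube, hence simple: every vertex lies on exactly $d$ facets and has exactly $d$ incident edges, each obtained by relaxing one tight facet.

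The core step is to show that at every visited vertex the pivot choice in Algorithm~\ref{alg:AS} is forced. At a vertex $\vec x$ with $\mathcal{A} = \Eq(\vec x)$ of size $d$, any nonzero feasible direction keeps at most $d-1$ of these constraints tight, and it keeps exactly $d-1$ tight if and only if it points along one of the $d$ incident edges. By Lemma~\ref{lem:long hirsch path} exactly one such edge is improving, so the direction maximizing $|\{i \in \mathcal{A} : \vec{A_{i\cdot}}\vec d = 0\}|$ over $\mathcal{D}$ is precisely this unique improving edge direction, leaving no freedom to the pivot rule. Within one iteration the first conditional then drops the single relaxed facet, after which $A_{\mathcal{A}\cdot}\vec d = \vec 0$ triggers the second conditional and the move.

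Next I would use convexity of $f$ to guarantee the move lands on the next vertex of $\mathcal{H}$ rather than stopping in the relative interior of an edge. Writing $g(\mu) \coloneqq f(\vec x + \mu \vec d)$, this restriction is convex because the Hessian of $f$ is positive semidefinite; since $g'(0) = \nabla f(\vec x)^\top \vec d > 0$, the derivative $g'$ is nonnegative and nondecreasing, so the stopping test $\nabla f(\vec x + \mu \vec d)^\top \vec d \leq 0$ never fires for $\mu \geq 0$. Hence $\mu$ is determined solely by leaving $\mathcal{P}$, i.e.\ by reaching the next vertex, where $\nabla f(\cdot)^\top \vec d > 0$ makes the algorithm re-add the newly tight facet and restore $\mathcal{A} = \Eq(\vec x)$. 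This is exactly where convexity is indispensable: for a non-convex objective the iterate could halt strictly inside an edge.

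Finally I would assemble the global picture. Starting from $\vec{\bar x}$, the preceding paragraphs show the iterates are precisely the preimages of $\vec{x^{(0)}}, \vec{x^{(1)}}, \dots, \vec{x^{(M-1)}}$ in order; since the edges of $\mathcal{H}$ are improving, the objective strictly increases, so $\vec{x^{(M-1)}}$ is the unique maximizer of the convex $f$ over $\mathcal{P}$, at which $\mathcal{D} = \emptyset$ and the algorithm terminates. Being forced to traverse all of $\mathcal{H}$, \textsc{ActiveSet} visits all $M = 2^{m/2}$ vertices and thus performs $2^{m/2}$ iterations. I expect the main obstacle to be the careful bookkeeping of Algorithm~\ref{alg:AS} in the second paragraph --- verifying that the drop-facet and move conditionals fire together in a single iteration at a simple vertex and that the argmax tie-break genuinely admits no alternative --- rather than the convexity or counting arguments, which are routine once the forced trajectory is established.
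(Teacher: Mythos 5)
Your proposal is correct and follows essentially the same route as the paper's proof: instantiate Theorem~\ref{thm:extended parabola} with $n=4d$ to get $m=2d$ facets and $M=2^{m/2}$ vertices, use convexity of $f$ to show each improving edge is traversed all the way to its endpoint, and invoke Lemma~\ref{lem:long hirsch path} to conclude that the iterates sweep out the full Hamiltonian path, visiting all $2^{m/2}$ vertices. The only difference is that you make explicit the bookkeeping the paper leaves implicit --- that $\mathcal{P}$ is combinatorially a $d$-cube, hence simple, so the tie-breaking rule in Algorithm~\ref{alg:AS} forces the unique improving edge direction and each edge move occupies exactly one iteration --- which is a useful elaboration of the same argument rather than a different one.
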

\begin{proof}
    First observe that, since the objective is convex, movement along an improving edge remains improving until reaching the endpoint of the edge. That is, for any two points $\vec x$ and $\vec y$, we have
    \[
    \nabla f(\vec x)^\top(\vec y-\vec x)>0 \implies \nabla f(\vec x+\lambda(\vec y-\vec x))^\top(\vec y-\vec x)>0, \, \forall\lambda\in[0,1].
    \]
    Therefore, when \textsc{ActiveSet} moves from a vertex along an improving edge, it will keep moving until reaching the endpoint of this edge (a new vertex).
    Lemma~\ref{lem:long hirsch path} immediately yields that~\textsc{ActiveSet} visits all vertices of the feasible region of~\eqref{QP} when started in~$\vec{\bar x}$.
    Also observe that the vertex~$\vec {\bar y}\in\mathcal{P}$ with~$\Pi(\vec{\bar y})=\vec{x^{(M-1)}}=(1,\, 0)^\top$ is the optimum of~\eqref{QP}.

    By Theorem~\ref{thm:extended parabola}, and since we have~$n = 4d$, the polytope~$\mathcal{P} \subseteq \R^d$ has $m = n/2 = 2d$ facets and~$M = (n/d)^{d/2} = 2^{n/4} = 2^{m/2}$ vertices.
    In particular, $\mathcal{P} \subseteq \R^{m/2}$.
    As observed, \textsc{ActiveSet} visits all these vertices.
\end{proof}

\section{Future research}

We may also relate our result to disproving the monotone Hirsch conjecture~\cite{dantzig1963linear}, by extending the definition of the monotone diameter of a polytope to convex quadratic objective functions.
Lemma~\ref{lem:long hirsch path} then yields the following.

\begin{corollary}
   The largest possible monotone diameter with respect to a convex quadratic objective is exponential.
\end{corollary}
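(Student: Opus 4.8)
The plan is to first make precise what we mean by the monotone diameter with respect to a convex quadratic objective, extending the classical notion for linear objectives. For a polytope~$\mathcal{Q}$ and a convex function~$g$, call a sequence of vertices~$\vec{y^0},\vec{y^1},\dots,\vec{y^\ell}$ a \emph{monotone path} if consecutive vertices are joined by an edge of~$\mathcal{Q}$ and~$g(\vec{y^0}) < g(\vec{y^1}) < \dots < g(\vec{y^\ell})$. Since~$g$ is convex, at every non-maximal vertex there is an improving edge (as already used in the proof of Theorem~\ref{thm:main_details}, movement along an improving edge stays improving up to its endpoint), so a monotone path from any vertex to a~$g$-maximizing vertex always exists. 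Define the monotone distance from a starting vertex to the set of maximizers as the length of a shortest such path, and the monotone diameter as the maximum of this quantity over all starting vertices. The quantity in the statement is the supremum of this monotone diameter over all polytopes and all convex quadratic objectives, as a function of the encoding size (number of facets).

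I would then instantiate this definition with the polytope~$\mathcal{P} \subseteq \R^d$ and the convex quadratic objective~$f$ of~\eqref{QP} underlying Theorem~\ref{thm:main_details}. By Lemma~\ref{lem:long hirsch path}, at every non-optimal vertex of~$\mathcal{P}$ there is a \emph{unique} improving edge, and these edges assemble into a single path~$\mathcal{H}$ that visits all~$M = (n/d)^{d/2}$ vertices and projects to~$\left(\vec{x^{(0)}},\dots,\vec{x^{(M-1)}}\right)$. The key observation is that this uniqueness forces every monotone path to follow~$\mathcal{H}$ edge by edge: by Lemma~\ref{lem:improving directions}, the only neighbor of the vertex projecting to~$\vec{x^{(t)}}$ with strictly larger objective value is its successor on~$\mathcal{H}$ (the preimage of~$\vec{x^{(t+1)}}$), since the direction towards~$\vec{x^{(t+k)}}$ is improving exactly when~$k=1$. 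Consequently, any monotone path starting at~$\vec{\bar x}$ with~$\Pi(\vec{\bar x}) = \vec{x^{(0)}}$ is a prefix of~$\mathcal{H}$, so the unique — and hence shortest — monotone path from~$\vec{\bar x}$ to the optimum~$\vec{\bar y}$ with~$\Pi(\vec{\bar y}) = \vec{x^{(M-1)}}$ is~$\mathcal{H}$ itself, of length~$M-1$.

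It follows that the monotone distance from~$\vec{\bar x}$ to the optimum equals~$M-1 = 2^{m/2}-1$, where~$m = 2d$ is the number of facets of~$\mathcal{P}$, and this lower-bounds the monotone diameter of~$\mathcal{P}$ for~$f$. As this is exponential in~$m$ (equivalently in the dimension~$d$), the largest possible monotone diameter with respect to a convex quadratic objective is exponential. I expect the only genuine subtlety to be purely definitional — fixing a convention for monotone paths under a nonlinear objective and noting that convexity of~$f$ renders each improving edge a true ascent — after which the length bound is an immediate consequence of the uniqueness established in Lemma~\ref{lem:improving directions} and Lemma~\ref{lem:long hirsch path}, requiring no further computation.
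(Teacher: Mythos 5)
There is a genuine gap, and it lies in your choice of definition. You define a monotone path by requiring only that the objective values at consecutive \emph{vertices} increase, $g(\vec{y^0})<g(\vec{y^1})<\dots<g(\vec{y^\ell})$, and then claim that Lemma~\ref{lem:improving directions} implies the only neighbor of (the preimage of) $\vec{x^{(t)}}$ with strictly larger value is the successor on $\mathcal{H}$. But Lemma~\ref{lem:improving directions} controls the sign of $\nabla f(\vec{x^{(t)}})^{\top}(\vec{x^{(t+k)}}-\vec{x^{(t)}})$, i.e., whether an edge is improving \emph{at its tail}; for a nonlinear objective this is not equivalent to the head having larger value. In fact a direct computation gives
\begin{equation*}
f\bigl(\vec{x^{(t)}}\bigr)=\frac{t^2}{(M-1)^2}-c\cdot\frac{t}{M-1}-\frac{1}{M-1}\left(\frac{t^2}{M-1}-t\right)=\frac{3t}{2(M-1)^2},
\end{equation*}
so the vertex values are strictly increasing in $t$, and \emph{every} neighbor lying further along the parabola is a legal step under your definition. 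Worse, under your definition the corollary is false for this construction: since $\mathcal{P}$ is combinatorially a product of polygons (Theorem~\ref{lem:AZ}~\ref{lem:AZ_item3}), the starting vertex $\vec{\bar x}$ (projecting to $\vec{x^{(0)}}$) and the optimum $\vec{\bar y}$ (projecting to $\vec{x^{(M-1)}}$) agree in all inner factors and differ only in the outermost polygon factor, where they occupy the two endpoints $\vec{v_{0,0}}$ and $\vec{v_{N/2-1,1}}$ of that polygon's closing edge; hence $\vec{\bar x}$ and $\vec{\bar y}$ are \emph{adjacent} in $\mathcal{P}$, and your monotone distance from $\vec{\bar x}$ to the optimum is $1$, not $M-1$. (This edge projects to the chord from $(0,0)$ to $(1,0)$, which raises the value of $f$ but is not an improving direction at $\vec{x^{(0)}}$ --- precisely the case $k=M-1$ excluded by Lemma~\ref{lem:improving directions}.)

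The fix is to adopt the definition the paper intends: a monotone path is one along which $f$ increases monotonically \emph{along each edge}, equivalently (by convexity of $f$, as noted in the proof of Theorem~\ref{thm:main_details}) each edge satisfies $\nabla f(\vec{y^j})^{\top}(\vec{y^{j+1}}-\vec{y^j})>0$ at its tail. For linear objectives this coincides with the classical vertex-value definition, so it is a legitimate extension of the monotone diameter, and it matches the moves available to the simplex and active-set methods. Under this definition your remaining argument is exactly the paper's: Lemma~\ref{lem:long hirsch path} gives a \emph{unique} improving edge at every non-optimal vertex, so the unique (hence shortest) monotone path from $\vec{\bar x}$ to the optimum is $\mathcal{H}$ itself, of length $M-1=2^{m/2}-1$ where $m=2d$ is the number of facets, which is exponential.
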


While a general bound for linear objectives, i.e., for the simplex method, would be a major breakthrough, reaching quadratic concave objectives would already be very interesting:
Much like in the linear setting, only weakly polynomial algorithms are known, namely ellipsoid and interior-point methods~\cite{kozlov1979polynomial,ye1989extension}, and the active-set method is a promising candidate for a strongly polynomial algorithm, due to its combinatorial nature.

\begin{question}
Is there a concave, quadratic polynomial over some polytope for which the active-set method takes super-polynomially many iterations for all pivot rules?
\end{question}

\newpage
\appendix
\section{Final step of Lemma~\ref{lem:Q}\label{app:proof_of_3.5}}

It remains to prove~\eqref{eq:appendix}.
We first carry out some preparatory computations, which we then combine to obtain the desired identity.
In the following, note that~$\ell\in\{0,1\}$ implies~$\ell=\ell^2$.

First, we have
\begin{align}
&~\varphi_{2}'(\vec{v_{j,\ell}})\notag\\
&= ~\left(\varphi_{2}(\vec{v_{j,\ell}})\right)^2-\varphi_{2}(\vec{v_{j,\ell}})\notag\\
&=~\left(\frac{2M_i(j+\ell)-\ell}{M_{i+2}-1}\right)^2 
    -\frac{(2M_i(j+\ell)-\ell)(M_{i+2}-1)}{(M_{i+2}-1)^2}\notag\\
    &=~ \frac{4j^2M_i^2+8j\ell M_i^2-4\ell j M_i+4\ell M_i^2-4 \ell M_i -2jM_iM_{i+2}-2\ell M_iM_{i+2}+\ell M_{i+2}+2jM_i+2\ell M_i}{(M_{i+2}-1)^2}.\label{eq:LHS1}
\end{align}

\noindent Second, due to
\begin{align*}
\varphi_{2}'(\vec{w_{j,\ell}})-\varphi_{2}'(\vec{v_{j,\ell}})~ &=~ \left(\varphi_{2}(\vec{w_{j,\ell}})\right)^2-\varphi_{2}(\vec{w_{j,\ell}})-\left(\varphi_{2}(\vec{v_{j,\ell}})\right)^2+\varphi_{2}(\vec{v_{j,\ell}})\\
&=~ (\varphi_{2}(\vec{w_{j,\ell}})-\varphi_{2}(\vec{v_{j,\ell}}))((\varphi_{2}(\vec{w_{j,\ell}})+\varphi_{2}(\vec{v_{j,\ell}})-1) \\
&=~\frac{(M_i-1)(1-2\ell)}{M_{i+2}-1}\left(\frac{4jM_i+2\ell M_i+M_i-1-(M_{i+2}-1)}{M_{i+2}-1}\right)
\end{align*}
and~$\varphi_i(\vec q) = \frac{s}{M_i-1}$ we have
\begin{align}
&\varphi_i(\vec q)(\varphi_{2}'(\vec{w_{j,\ell}})-\varphi_{2}'(\vec{v_{j,\ell}})) \notag\\
&=~\frac{s(1-2\ell)}{M_{i+2}-1}\left(\frac{4jM_i+2\ell M_i+M_i-1-(M_{i+2}-1)}{M_{i+2}-1}\right)\notag\\
&=~\frac{4jsM_i+2\ell sM_i+sM_i-sM_{i+2}-8s\ell jM_i-4s\ell M_i-2s\ell M_i+2s\ell M_{i+2}}{(M_{i+2}-1)^2}.\label{eq:LHS2}
\end{align}

\noindent Third, by induction,
\[
\varphi_i'(\vec q) = \varphi_i(\vec q)^2 - \varphi_i(\vec q) =\left(\frac{s}{M_i-1}\right)^2 - \frac{s}{M_i-1},
\]
and we thus have
\begin{equation}\label{eq:LHS3}
\left(\frac{M_i-1}{M_{i+2}-1}\right)^2\varphi_i'(\vec q) = \frac{s^2-sM_i+s}{(M_{i+2}-1)^2}.    
\end{equation}

\noindent Fourth, \eqref{eq:def s} yields, for both values of~$\ell \in\{0,1\}$, 
\[
t=(2\ell-1)(4j\ell M_i-2jM_i+2\ell M_i-\ell -s).
\]
With~$(2\ell-1)^2=1$, since~$\ell\in\{0,1\}$, we get
\begin{equation}\label{eq:RHS1}
t^2=8j\ell M_i^2-4j\ell M_i-8j\ell sM_i+4j^2M_i^2+4jsM_i+ 4M_i^2\ell-4\ell M_i-4\ell s M_i+\ell+2\ell s+s^2.    
\end{equation}

\noindent Fifth, we have
\begin{align}
&-t(M_{i+2}-1) \notag\\
&=~ -2\ell M_iM_{i+2}+\ell M_{i+2}+2\ell sM_{i+2}-2jM_iM_{i+2}-sM_{i+2}+2\ell M_i-\ell-2\ell s+2j M_i+s. \label{eq:RHS2}   
\end{align}

By combining~\eqref{eq:LHS1}-\eqref{eq:RHS2}, we finally obtain
\begin{align*}
    \varphi_{i+2}'(\vec p) ~ &=~ \varphi_{2}'(\vec{v_{j,\ell}})+\varphi_i(\vec q)(\varphi_{2}'(\vec{w_{j,\ell}})-\varphi_{2}'(\vec{v_{j,\ell}}))
    + \left(\frac{M_i-1}{M_{i+2}-1}\right)^2\varphi_i'(\vec q)\\
    &=~ \frac{t^2-t(M_{i+2}-1)}{(M_{i+2}-1)^2}\\
    &=~ \varphi_{i+2}(\vec p)^2 - \varphi_{i+2}(\vec p).
\end{align*}

%
\bibliographystyle{splncs04}
\bibliography{references}

\end{document}